\newtheorem{myclaim}[theorem]{Claim}
\newcommand{\bool}{\{0,1\}}
\newcommand{\booln}{\bool^n}
\def\cala{{\cal A}}
\def\calf{{\cal F}}
\def\calx{{\cal X}}
\def\caly{{\cal Y}}
\newcommand{\twopartdef}[4]
{
	\left\{
		\begin{array}{ll}
			#1 & \mbox{if } #2 \\
			#3 & \mbox{if } #4
		\end{array}
	\right.
}
\newcommand{\algo}[1]{\mbox{\tt #1}}
\newcommand{\alA}{\algo{A}}
\newcommand{\Ex}{\mathbb{E}}
\newcommand{\IP}{\algo{IP}}
\newcommand{\RP}{\algo{RP}}
\newcommand{\smallip}{\algo{ip}}
\newcommand{\CMC}{\texttt{CMC}}
\newcommand{\ipa}{\algo{ip}}
\newcommand{\flab}{\algo{flab}}
\newcommand{\vlab}{\algo{vlab}}
\title{Bandwidth-Hard Functions from Random Permutations}
\author{}
\institute{}
\author{Rishiraj Bhattacharyya\inst{1} \and Avradip Mandal\inst{2}}
\institute{University of Birmingham, UK \email{rishiraj.bhattacharyya@gmail.com} \and Skyflow, \email{avradip@gmail.com}}
\begin{document}
\maketitle

\begin{abstract}
    ASIC hash engines are specifically optimized for parallel computations of cryptographic hashes and thus a natural environment for mounting brute-force attacks on hash functions. Two fundamental advantages of ASICs over general purpose computers are the area advantage and the energy efficiency. The memory-hard functions approach the problem by reducing the area advantage of ASICs compared to general-purpose computers. Traditionally, memory-hard functions have been analyzed in the (parallel) random oracle model. However, as the memory-hard security game is multi-stage, indifferentiability does not apply and instantiating the random oracle becomes a non-trivial problem. Chen and Tessaro (CRYPTO 2019) considered this issue and showed how random oracles should be instantiated in the context of memory-hard functions.

  The Bandwidth-Hard functions, introduced by Ren and Devadas (TCC 2017), aim to provide ASIC resistance by reducing the energy advantage of ASICs. In particular, bandwidth-hard functions provide ASIC resistance by guaranteeing high run time energy cost if the available cache is not large enough. Previously, bandwidth-hard functions have been analyzed in the parallel random oracle model. In this work, we show how those random oracles can be instantiated using random permutations in the context of bandwidth-hard functions. Our results are generic and valid for any hard-to-pebble graphs.

\end{abstract}
\keywords{memory-hard, bandwidth-hard, red-blue pebbling, random permutation model}

\section{Introduction}
\label{sec:introduction}
Cryptographic hash functions are ubiquitous in modern-day protocols. They are particularly important in password hashing, and proof of work (POW) based blockchain protocols. However, with recent advances in ASIC (Application Specific Integrated Circuits) hash engines, one can compute standard hash functions more efficiently using parallelization. Brute forcing a password database seems quite feasible in the ASIC environment. This observation led to a growing interest in ASIC resistant hash function design, where the goal is to design a hash function whose \emph{evaluation cost} remains nearly identical, irrespective of the hardware it is being evaluated on.

\noindent\textsc{Memory-Hard Functions.} Memory-Hard functions, proposed by Percival \cite{percival2009stronger}, consider the memory cost to be the main hardware cost that balances the situation across different platforms. In particular, the area advantage of ASICs due to dedicated, small-foot-print hash computation units gets nullified if significant spending of area for memory is required. A function is called memory-hard if it requires a lot of memory capacity to evaluate, even using parallel computation. Thus for a memory-hard function, faster evaluation through ASIC would imply more cost due to memory. There has been a long line of research \cite{percival2009stronger,STOC:AlwSer15,AC:BonCorSch16,iacr/ForlerLW13} on ASIC resistant Memory Hard Functions (MHFs).  Scrypt \cite{percival2009stronger} is the first provably secure (sequentially) memory-hard function. However, it had a data-dependent memory access pattern. Balloon hash by Boneh et. al. \cite{AC:BonCorSch16} was one of the first practical, provably secure memory hard hash functions in the random oracle model with data-independent memory access pattern. However, they did not consider the Parallel Random Oracle Model \cite{STOC:AlwSer15} which models adversarial capabilities more realistically. Both \cite{AC:BonCorSch16} and \cite{STOC:AlwSer15} followed the framework of Dwork, Naor and Wee \cite{C:DwoNaoWee05} which has been applied previously in numerous cryptographic work \cite{EC:ACKKPT16,C:DFKP15,TCC:DziKazWic11,AC:ForLucWen14} and related hardness of pebble games \cite{hewitt1970record,cook1973observation}, hardness of certain computations in the random oracle model \cite{CCS:BelRog93}.

\noindent\textsc{Bandwidth-Hard Functions.}
Recently Ren and Devadas \cite{TCC:RenDev17} proposed the notion of \emph{bandwidth-hardness}. Complementing the principle of memory-bound functions \cite{ndss/AbadiBW03}, they argued faster computation in ASIC is not completely free; the energy spent for on-chip computation and memory accesses is the running cost of the ASIC environment. The energy spent on memory accesses is the same across different platforms, thus a good target for normalizing costs across different platforms. A function is called bandwidth-hard if the energy spent on memory access dominates the evaluation's energy cost.

 Ren and Devadas, however, only considered the sequential model of computation, not the more realistic parallel random oracle model. Blocki et. al. \cite{CCS:BloRenZho18} addressed this and extended the notion of bandwidth-hard functions in the parallel random oracle model. They showed various memory-hard functions like scrypt \cite{percival2009stronger}, Argon2i \cite{biryukov2016argon2}, aATSample and DRSample \cite{CCS:AlwBloHar17} are in fact bandwidth-hard under appropriate cache size. 

 \noindent\textsc{Graph Labelling and Pebbling Complexity.} Memory Hard functions and Bandwidth-Hard are constructed based on \emph{Directed Acyclic Graphs} (DAG). Let $G=(V,E)$ be a DAG over $n$ vertices $V=\{v_1,\cdots,v_n\}$. The vertices of $V$ are sorted in some topological order, a path from $v_i$ to $v_j$ implies $v_i\leq v_j$. $v_1$ denotes the unique source, and $v_n$ denotes the unique sink of the graph. The function is defined by the graph $G$ and a labelling function based on a random oracle $H$. The evaluation of the function on input $x$ is the label $\ell$ of the sink $v_n$, where the label of a node $v_i$ is recursively defined as $\ell_i=H(i,\ell_{u_1},\ell_{u_2},\ldots,\ell_{u_\delta})$ where $u_1,\ldots,u_\delta$ are the predecessors of $u_i$.

 The hardness of the function is proved via \emph{pebbling} complexity of the graph $G$. The pebbling complexity is defined as a game of several rounds. Initially, all vertices of the graph are empty. In every round, one puts pebbles on vertices if all their predecessors already have pebbles on them from the previous round. One can remove a pebble from a vertex at any time. The game concludes when the player puts a pebble on the sink. In the sequential model of computation, only one vertex can be pebbled at each round. The cumulative pebbling complexity of a game is the sum of the number of pebbled vertices over each round. The cumulative pebbling complexity of the graph is the minimum cumulative pebbling complexity of any pebbling strategy to complete the game.

 It is natural to relate the cumulative pebbling complexity of $G$ with the upper bound of cumulative memory complexity of the memory-hard function defined over $G$. To evaluate the function, one can follow the pebbling game and compute labels of the vertices being pebbled in that round. Putting a pebble on $v_i$ with predecessors  $u_1,\ldots,u_\delta$ implies that all of $u_1,\ldots,u_\delta$ are pebbled and hence their labels are computed. Thus one can compute $\ell_i=H(i,\ell_{u_1},\ell_{u_2},\ldots,\ell_{u_\delta})$.   In the breakthrough result of \cite{STOC:AlwSer15}, Alwen and Serbinenko showed that a lower bound of cumulative pebbling complexity of $G$ also implies a lower bound on the cumulative memory complexity of the memory-hard function defined over $G$ when $H$ is a random oracle that can be accessed in a parallel fashion (more than one query at a time). All the major follow up results \cite{crypto/AlwenB16,eurosp/AlwenB17,eurocrypt/AlwenBP17} then focused on constructing graphs with high pebbling complexity. For bandwidth-hardness in \cite{TCC:RenDev17}, the authors defined the energy cost that relates to a red-blue pebbling game. 
In a red-blue pebbling game, there are two types of pebbles. A red pebble corresponds to the data in the cache, and a blue pebble models data in memory. Data in memory must be brought into the cache before using it for computation. Accordingly, a blue pebble must be replaced by a red pebble before the successors can be pebbled.

\subsection{Our Contributions}
\label{sec:our-contributions}

We formalize the notion of bandwidth-hardness in the parallel random permutation model and provide a generic construction of a data-independent random permutation based bandwidth-hard function. The main contribution is a reduction to the energy complexity of the constructed function from the red-blue pebbling complexity of the underlying depth-robust graph. The results extend the theory developed by Chen and Tessaro~\cite{C:CheTes19} who established an analogous relation for the \emph{cumulative memory complexity}~\cite{STOC:AlwSer15} of a random permutation based memory-hard function and the \emph{black} pebbling complexity of the underlying graph.

% Our two main contributions are to formalize the notion of bandwidth-hardness in the parallel random permutation model and to provide graph-based constructions of bandwidth-hard functions based on random permutations. Our results provide a framework for proving bandwidth-hardness of functions analogous to the results of Chen and Tessaro \cite{C:CheTes19} for memory-hardness \cite{STOC:AlwSer15}.

We start by considering graphs with constant in-degree $\delta$. We consider a Davis-Meyer inspired labelling function, which makes one call to an underlying random permutation. We demonstrate that the given a graph along with the labelling function will establish to a bandwidth-hard function if the red-blue pebbling complexity of the given graph is high. 

The labelling function for a constant-indegree graph makes a single query to the underlying random permutation. The labelling function $\sf{lab}$ takes input $(x_1,x_2,\ldots,x_\delta)$ where $x_1,\ldots,x_\delta$  are the labels of the $\delta$-many predecessors and outputs $\pi(\xor_{i=1}^\delta x_i)\xor \xor_{i=1}^\delta x_i $. If a node has a single predecessor, then the labelling function outputs $\pi(x)\xor x$ where $x$ is the label of the predecessor. The main technical contribution of our work is to devise a compression argument which works for any constant in-degree (\cite{C:CheTes19} crucially requires indegree to be $2$),  even when the labelling function does not take the identity of the node. The adversary can make inverse queries to the underlying primitive (in our case, random permutation).

Finally, we extend the output length of the bandwidth-hard function by extending the technique of \cite{C:CheTes19} and show a direct way to convert any depth-robust graph $G$ into a bandwidth-hard function. 

\subsubsection{Impact of our result}
\label{sec:impact-our-result}
The main analysis aims to prove that the notion of red-blue pebbling games can be applied to capture the bandwidth-hardness in the parallel random permutation model as well. While parallel random oracle models have been popular in this setting, we stress that in all practical protocols, the random oracles are instantiated using a standard hash function where the underlying building blocks are public. Indeed the Keccak-$f$, the SHA-3 standard, uses a large permutation as the building block on which the sponge mode is applied to instantiate the hash function. Moreover, as the underlying game is inherently multi-stage, we can not leverage the indifferentiablity results of the standard hash functions. Thus our work fills the crucial gap of building the theory of energy-hardness on a more realistic ideal assumption.  

We note that the construction of a random permutation based bandwidth-hard function could be dug out from two previous works. One can use the technique of \cite{C:CheTes19} to first prove memory-hardness and then apply a result from \cite{CCS:BloRenZho18}. The later work showed a lower bound of energy cost by the square root of the memory-cost. The composition results in a lower bound on $\Omega(|V|\sqrt{n})$ where $|V|$ is the number of vertices in the underlying graph, and $n$ is the output length of the random permutation. In contrast, we show a hardness lower bound of $\Omega(|V|n)$ which is a \emph{quadratic improvement} on the block size of the underlying permutation. As we show later, the improvement comes from a revised predictor algorithm where we crucially leverage the structure of the labelling function and thus avoid the need to use a generic lower bound argument. 

Finally, our random oracle instantiation is the same as in Argon2i. Thus our work is the first step in establishing proof of security for Argon2i in the random permutation model.

\subsubsection{Overview of our Technique}
\label{sec:overv-our-techn}
We construct bandwidth-hard functions from hard-to-pebble graphs. Our main theorem presents a lower bound of the energy complexity of the constructed function in terms of the red-blue pebbling complexity of the underlying graph. As explored in \cite{TCC:RenDev17,CCS:BloRenZho18}, the energy cost of function evaluation is derived based on the amount of data transfer between the cache and the main memory. Our objective is to transform any evaluation strategy of the function into a red-blue pebbling of the graph such that the red-blue pebbling complexity remains proportional to the energy cost of the evaluation.

In \cite{C:CheTes19}, the authors already showed how to transform any evaluation strategy into a \emph{black} pebbling. The idea is to examine the random permutation queries made by the evaluation in each round. We say the label for vertex $v$ is an output of round $i$, if the  random permutation query $\pi(\xor_{i=1}^\delta x_i)$ is made at the end of round $i$, where the labels $x_1,\ldots,x_{\delta}$ are of the predecessors of $v$ and were generated as output in the previous rounds. A pebble placed on the vertex $v$ in round $i$ if label of vertex $v$ is an output of round $i$ and in some future round the label of some successor of $v$ is evaluated  and label of $v$ is not recomputed in the intermediate rounds. Thus the label of $v$ must be kept in memory.

 In our case, however, we aim to derive the relations in terms of the amount of \emph{data-transfer} between cache and the memory. While the new red pebbles correspond to the random-permutation query, we need to relate the red to blue or blue to red moves with the data-transfer. The challenge is to extract such relation when the evaluator may use arbitrary data encoding. Moreover, the evaluation may indeed query the inverse functions. Thus, we can not make any direct connections between the data-transfer and the red-blue pebbling by following \cite{C:CheTes19} alone.

The idea is to establish the relationship in a somewhat amortized fashion. Following \cite{CCS:BloRenZho18}, we consider the setup where the evaluation has only $mn$ bits of cache (containing $m$ many $n$-bit words), and the pebbling has $20\delta m$ many red pebbles available. We start with a black pebbling $P=(P_1,\ldots,P_t)$. Instead of looking at each round, we partition the rounds into intervals $\{(t_0=0,t_1], (t_1,t_2],\ldots (t_{k-1},t_k=t]\}$ such that the evaluation transfers at least $mn$ bits between the cache and the memory during each interval.  We design the our red-blue pebbling in such a way so that we make at most $10\delta m$ red moves during that interval. We divide the red pebbles into two sets of size $10\delta m$ denoted as $R_i^{old}$ and $R_i^{move}$ respectively.  The set $R_i^{old}$ will be kept as red, whereas the set $R_i^{move}$ is transferred from blue pebbles. Thus the evaluation incurs energy cost to transfer $mn$ bits of data, and the red-blue pebbling has the cost of at most $20\delta m$  moves between red and blue. The final challenge is to ensure that such partitioning is always possible: ``at least $mn$ bits transferred at the cost of at most $20\delta m$ moves''. As in ~\cite{STOC:AlwSer15,C:CheTes19,CCS:BloRenZho18}, we can compress a random permutation with the help of any evaluator who bypasses such partitioning.

\section{Notations and Preliminaries}
\label{sec:notat-prel}
If $S$ is a set, $|S|$ denotes the size of $S$. $x\sample S$ denotes
the process of choosing $x$ uniformly at random from $S$. For strings $x$ and $y$, $|x|$ denotes the length of the string $x$. $x||y$ denotes the concatenation of $x$ and $y$.  $[n]$
denotes the set $\{1,\cdots,n\}$. We use boldface letters to denote vectors.

We need the following Lemma from \cite{TCC:DziKazWic11}.
\begin{lemma}
  \label{lemma:old}
  Let $B$ be a sequence of bits. Let $\alA$ be a randomized algorithm that, on input $h\in H$ for some set $H$, adaptively queries specific bits of $B$ and outputs guesses for $k$ many bits that were not queried. The probability (over the random coins of $\alA$) that there exists an $h\in H$ where $\alA(h)$ guesses all bits correctly is at most $\frac{|H|}{2^k}$.  
\end{lemma}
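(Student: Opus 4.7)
The plan is to prove Lemma~\ref{lemma:old} as a standard predictor/compression bound, arguing first for a single $h$ and then using a union bound. I read the statement under the standard DKW interpretation where $B$ is a uniform random string (independent of $\alA$'s coins) and the probability in the statement is effectively over $B$ together with $\alA$'s coins; with a fully adversarial or known $B$ the statement would be vacuous since $\alA$ could simply hard-code the answer.

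The first main step is to fix both an $h \in H$ and a particular choice $r$ of $\alA$'s random coins. Under these choices, $\alA(h;r)$ becomes a deterministic function of $B$ that is naturally described as a decision tree: each internal node is a query to one bit of $B$, the two children correspond to the returned bit value, and each leaf records the $k$ bit positions $\alA$ chooses to guess together with its predicted vector. By hypothesis, along any root-to-leaf path the $k$ predicted positions are disjoint from the positions queried on that path. When $B$ is uniform, the path followed in the tree is a random variable, but conditional on reaching any specific leaf the $k$ target bits are still uniform and mutually independent, since they were never queried. Hence the probability of a correct guess at that leaf is exactly $2^{-k}$, and summing over leaves weighted by their reach-probabilities gives $\Pr_B[\,\alA(h;r)\text{ guesses correctly}\,] \le 2^{-k}$.

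The final step is a union bound over the at most $|H|$ choices of $h$, keeping $r$ fixed, which yields $\Pr_B[\,\exists h \in H : \alA(h;r)\text{ guesses correctly}\,] \le |H|/2^k$; averaging over $r$ preserves the bound. The main obstacle to avoid is a naive union bound that also quantifies over the $\binom{|B|}{k}$ possible guess-position sets, which would lose a prohibitive factor. The decision-tree view bypasses this because it re-invokes uniformity of the non-queried bits \emph{conditionally at each leaf} rather than globally, so the set of predicted positions is pinned down by the transcript and no additional enumeration is needed. Apart from this point the argument is clean: one only needs to verify at each leaf that the predicted positions are disjoint from the queried positions, which is exactly the hypothesis on $\alA$.
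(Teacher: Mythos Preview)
The paper does not give its own proof of Lemma~\ref{lemma:old}; it simply imports the statement from \cite{TCC:DziKazWic11}. Your argument is correct and is exactly the standard one: fix the coins, view the adaptive algorithm as a decision tree over the bits of $B$, observe that at each leaf the $k$ predicted positions were never queried so remain uniform, conclude $2^{-k}$ per leaf, then union-bound over $h$. Your remark that the probability must be taken over a uniformly random $B$ (not merely over $\alA$'s coins, as the paper's wording literally says) is well taken; without that the bound is meaningless. If you want a point of comparison inside the paper, look at the proof of Lemma~\ref{lemma:newold}: it follows precisely the same skeleton you used---fix $h$, condition on the query transcript, bound the conditional success probability, then union-bound over $h$---so your approach is fully aligned with how the authors handle the analogous statement they do prove.
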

\subsection{Parallel Random Permutation Model}
\label{sec:parall-ideal-prim}
In this paper, we consider the ideal random permutation model for our analysis. Let $\Pi$ be the set of all permutations over $\booln$. The parallel ideal primitive model \cite{C:CheTes19} is the generalized model of the parallel random oracle model of Alwen and Serbinenko \cite{STOC:AlwSer15}. For any oracle algorithm \algo{A}, input $x$ and internal randomness $r$, the execution in the parallel ideal primitive model works in the following way. A function \algo{ip} is chosen uniformly at random from the family $\IP$. $\algo{A}$ can query $\algo{ip}$.  In our paper, $\IP$ is a family of random permutations. Thus, a permutation $\pi$ is chosen uniformly at random from $\Pi$. the forward queries are denoted by $(\smallip,+,x)$ and the response is $\pi(x)$. The inverse queries are denoted by $(\smallip,-,y)$ the response is $\pi^{-1}(y)$.

\subsection{Complexity Models}
\label{sec:complexity-models}
Throughout the paper, the notion of state is tuple $(\sigma,\zeta)$ where $\sigma$ is used to denote the content of the cache and $\zeta$ is used to denote the content of the memory. Let $\sigma_0=x, \zeta_0=\emptyset)$ be the initial state where the cached content $x$ is the input given to \algo{A} who has a cache size of $mn$ bits, and the main memory content is empty. For each round $\alA(x;r)$ takes input the cache state $\sigma_{i-1}$, performs unbounded computation and transfer data between memory and cache, and generates output cache state $\sigma'_i=(\delta_i,\mathbf{q}_i, \mathbf{out}_i)$  where $\delta_i$ is a binary string, $\mathbf{q}_i$ is a vector of queries to the random permutation $\pi$, and each element of $\mathbf{out}_i$ is of the form $(v,\ell_v)$ where $v$ and $\ell_v$ are $n$-bit labels.  After the $i^{th}$ round, $\zeta'_i$ denotes the content of the memory. We define $\sigma_i=(\delta_i,\mbox{\bf ans}(\mathbf{q}_i))$ to be the input cache state for round $i+1$, where $\mbox{\bf ans}(\mathbf{q}_i)$ is the vector of responses to queries $\mathbf{q}_i$. As the cache size is $mw$ bits, $|\mathbf{q}_i|\leq m$ as otherwise, $\alA$ will not be able to store all the responses in the cache. We say the $\alA$ terminates after round $t$ if $\mathbf{q}_t=\emptyset$.
$\alA(x;r)$ needs to transfer data between cache and memory. We assume the attacker follows some \emph{arbitrary} functions $f_1,f_2,f_3,f_4$ for communication between cache and memory during each round. The only requirement is that the functions need to be independent of the random permutation. We let $R_i=\{\rho_i^1,\rho_i^2,\cdots,\rho_i^{l_i}\}$ denote the sequence of messages sent from cache to memory during round $i$. Similarly, we let $S_i=\{s_i^1,s_i^2,\cdots,s_i^{l_i}\}$ denote the responses sent from memory to cache. 

Trace of $\alA(x;r)$ is the vector of all the input and output states of the execution. $trace(\alA(x;r))=(\sigma_0,\sigma'_1,\zeta'_1,R_1,S_1,\sigma_1,\cdots)$. Throughout the paper, we assume an upper bound, denoted by $q$, on the total number of queries made to $\pi$ by $\alA(x;r)$.

\subsubsection{Memory Complexity}
\label{sec:memory-complexity}
Given $trace(\alA^\pi(x;r))$ on input $x$, randomness $r$ and a random permutation $\pi$, the time complexity $time(\alA^\pi(x;r))$ is defined as the number of rounds before $\alA$ terminates. We define the space complexity $space^\pi(\alA(x;r))$ by the size of the maximal input state. The time complexity (resp. space complexity) of $\alA$ is the maximal time complexity (resp. maximal space complexity) overall $x,r$ and $\pi$. The cumulative memory complexity (CMC) \cite{STOC:AlwSer15} is defined as
\begin{definition}
  \label{def:cmc}
  Given  $trace(\alA^\pi(x;r))$, we define the cumulative memory complexity as
  \begin{align*}
    CMC(\alA^\pi(x;r)) =\sum_{i=0}^{time(\alA^\pi(x;r))} |\sigma_i| +|\zeta_i|
  \end{align*}
  where $\sigma_i$ denotes the input cache state of round $i$ and $\zeta_i$ denotes memory content at the start of round $i$.\\
  For a real $\epsilon\in (0,1)$, and a family of function $\mathcal{F}=\{f^\pi\colon\calx\to\caly\}_{\pi\in\Pi}$, we define the $\epsilon$-cumulative memory complexity of $\calf$ is defined to be
  \begin{align*}
    CMC_{\epsilon}(\calf)\eqdef \mbox{min}_{x\in\calx,\alA\in \cala_{x,\epsilon}} \Ex[CMC(\alA^\pi(x;r))]
  \end{align*}
  where $ \cala_{x,\epsilon}$ is the set of all parallel algorithms that with probability at least $\epsilon$ on input $x$ and oracle access to $\pi$, output $f^\pi(x)$.    Here, the probability and expectation are calculated over the randomness of the random permutation $\pi$ and the internal randomness of $\alA$.
 \end{definition}

\subsubsection{Energy Complexity}
\label{sec:energy-complexity}
\begin{definition}
  \label{def:cmc}
  Given  $trace(\alA^\pi(x;r))$, we define the energy complexity as
  \begin{align*}
    cost(\alA^\pi(x;r)) =\sum_{i=0}^{time(\alA^\pi(x;r))} \left(c_r|\mathbf{q}_i|+ \sum_{j=1}^{l_i} c_b (|\rho_i^j|+|s_i^j|)\right)
  \end{align*}
  where $c_r$ denotes the cost of random permutation query and $c_b$ denotes the cost for data transfer between the cache and the main memory. \\
  For fixed $c_b,c_r$, a real $\epsilon\in (0,1)$, and a family of function $\mathcal{F}=\{f^\pi\colon\calx\to\caly\}_{\pi\in\Pi}$, we define the $\epsilon$-energy complexity of $\calf$ is defined to be
  \begin{align*}
    ecost_{\epsilon}(\calf,mn)\eqdef \mbox{min}_{x\in\calx,\alA\in \cala_{x,\epsilon}} \Ex[cost(\alA^\pi(x;r))]
  \end{align*}
  where $ \cala_{x,\epsilon}$ is the set of all parallel algorithms with at most $mn$ bits of cache that with probability at least $\epsilon$ on input $x$ and oracle access to $\pi$, output $f^\pi(x)$.
\end{definition}

\subsection{Memory and Bandwidth Hardness}
\label{sec:memory-bandw-hardn}
In this section, we recall the definition of Memory and Bandwidth hardness.
\subsubsection{Memory Hard Functions}
\label{sec:memory-hard-funct}
We now recall the definition of memory-hardness in the random permutation model. The definition follows the spirit of the definition provided in \cite{C:CheTes19}. Intuitively a function is memory-hard if there exists a somewhat efficient sequential algorithm that computes the function, whereas any parallel algorithm that computes the function correctly must pay a high CMC cost.

\begin{definition}
    \label{def:mhf-orig}
    Consider a family of function $\mathcal{F}=\{f^\pi\colon\calx\to\caly\}_{\pi\in\Pi}$. The family $\mathcal{F}$ is $(\epsilon, \delta, q)$ memory hard if for all evaluation point $x$ and an oracle-aided algorithm $\alA^\pi$ that makes at most $q$ many $\pi$ queries such that:
    \begin{itemize}
        \item $\Pr[\alA^\pi(x) = f^\pi(x)] > \epsilon$
        \item  $CMC_{\epsilon}(\calf)\geq \delta$
    \end{itemize}
    Here, the probability and expectation are calculated over the randomness of the random permutation $\pi$ and the internal randomness of $\alA$.
\end{definition}
\subsubsection{Bandwidth-Hard Functions}
\label{sec:bandw-hard-funct-1}
We now extend the above definition of memory-hardness to the bandwidth-hardness in the random permutation model.
\begin{definition}
    \label{def:mhf-orig}
     Consider a family of function $\mathcal{F}=\{f^\pi\colon\calx\to\caly\}_{\pi\in\Pi}$. The family $\mathcal{F}$ is $(\epsilon, \delta, q)$ bandwidth-hard if for all evaluation point $x$ and all oracle aided algorithm $\alA^\pi$ that makes at most $q$ many $\pi$ queries and uses at $mn$ bits of cache it holds that
    \begin{itemize}
        \item $\Pr[\alA^\pi(x) = f^\pi(x)] > \epsilon$
        \item  $ecost_{\epsilon}(\calf,mn)\geq \delta$
    \end{itemize}
    Here, the probability and expectation are calculated over the randomness of the random permutation $\pi$ and the internal randomness of $\alA$.
\end{definition}  
\subsection{Graphs and Pebbling Models}
\label{sec:graphs-pebbl-models}
We use $G=(V,E)$ to denote a directed acyclic graph (DAG) with $|V|=2^n$ nodes, Let ${\sf src}(V)\subseteq V$ be the set of source nodes, and ${\sf snk}(V)\subseteq V$ be the set of sink nodes. For a node $v$, ${\sf pred}(v)\eqdef\{u\in V|(u,v)\in E\}$ are the predecessors of $v$, and ${\sf succ}(v)\eqdef\{w\in V|(v,w)\in E\}$ are the successors of $v$. We use $ind(v)=|{\sf pred}(v)|$ as the indegree of $v$. For a directed acyclic path $P$, the length of $P$ is the number of nodes it traverses. ${\sf depth}(G)$ is the length of the longest acyclic path in $G$. For a source node $v$, $ind(v)=0$. For $S\subseteq V$, ${\sf pred}(S)\eqdef \cup_{v\in S}{\sf pred}(v)$.

\begin{definition}
  A DAG $G=(V,E)$ is $(e,d)$- depth-robust if and only if ${\sf depth}(G\setminus S)\geq d$ for any $S\subseteq V$ where $|S|\leq e$. Moreover, $G$ is said to be $(e,d)$-source-to-sink depth-robust if and only if for any $S\subseteq V$ where $|S|\leq e$, $G\setminus S$ has a path of length at least $d$ from a source node to a sink node in $G$.
\end{definition}

\subsubsection{Pebbling}
\label{sec:pebbling}
The pebbling game is played on a DAG $G$ in the mode of parallel pebbling. A pebbling of a DAG $G$ is a sequence of pebbling configurations $P=(P_0,\cdots,P_t)$ where $P_0=\emptyset$ and $P_i\subseteq V$ for all $i\in [t]$. A pebbling is legal if for any $i\in [t]$, for any $v\in P_i\setminus P_{i-1}$, ${\sf pred}(v)\in P_{i-1}$.  Finally, a pebbling is successful, if for every sink node $v_s$, there exists $i\in [t]$, $v_s\in P_i$.

\subsubsection{Red-Blue Pebbling}
\label{sec:red-blue-pebbling}

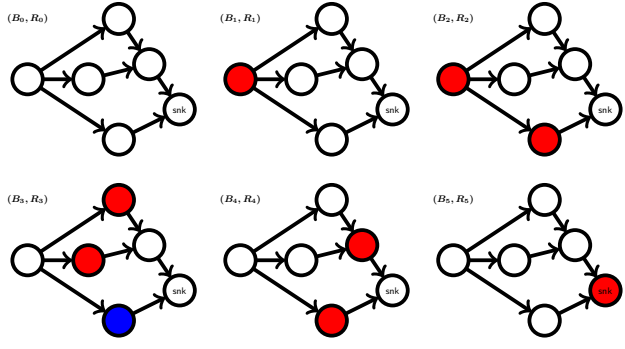
\begin{figure}
  \centering
  \begin{tikzpicture}[ thick,scale=0.4, every node/.style={transform shape},blank/.style = {draw, circle,line width=0.5mm,minimum size=1cm},red/.style = {blank, fill=red},blue/.style = {blank,,fill=blue},to/.style={->,line width=0.5mm}]

\coordinate (shift1) at (7,0);
\coordinate (shift2) at (14,0);
\coordinate (shift3) at (0,-6);
\coordinate (shift4) at (7,-6);
\coordinate (shift5) at (14,-6);

\begin{scope}
	\node [font=\boldmath] at (0,2) { $(B_0,R_0)$};
	\node[blank] (1) at (0,0) {}; 
	\node[blank] (2) at (2,0) {}; 
	\node[blank] (3) at (3,2) {}; 
	\node[blank] (4) at (3,-2) {}; 
	\node[blank] (5) at (4,0.5) {}; 
	\node[blank] (6) at (5,-1) {${\sf snk}$}; 
	\draw[to] (1) -- (2);
	\draw[to] (1) -- (3);
	\draw[to] (1) -- (4);
	\draw[to] (2) -- (5);
	\draw[to] (3) -- (5);
	\draw[to] (5) -- (6);
	\draw[to] (4) -- (6);
\end{scope}

\begin{scope}[shift =(shift1)]
	\node [font=\boldmath] at (0,2) { $(B_1,R_1)$};
	\node[red] (1) at (0,0) {}; 
	\node[blank] (2) at (2,0) {}; 
	\node[blank] (3) at (3,2) {}; 
	\node[blank] (4) at (3,-2) {}; 
	\node[blank] (5) at (4,0.5) {}; 
	\node[blank] (6) at (5,-1) {${\sf snk}$}; 
	\draw[to] (1) -- (2);
	\draw[to] (1) -- (3);
	\draw[to] (1) -- (4);
	\draw[to] (2) -- (5);
	\draw[to] (3) -- (5);
	\draw[to] (5) -- (6);
	\draw[to] (4) -- (6);
\end{scope}

\begin{scope}[shift =(shift2)]
	\node [font=\boldmath] at (0,2) { $(B_2,R_2)$};
	\node[red] (1) at (0,0) {}; 
	\node[blank] (2) at (2,0) {}; 
	\node[blank] (3) at (3,2) {}; 
	\node[red] (4) at (3,-2) {}; 
	\node[blank] (5) at (4,0.5) {}; 
	\node[blank] (6) at (5,-1) {${\sf snk}$}; 
	\draw[to] (1) -- (2);
	\draw[to] (1) -- (3);
	\draw[to] (1) -- (4);
	\draw[to] (2) -- (5);
	\draw[to] (3) -- (5);
	\draw[to] (5) -- (6);
	\draw[to] (4) -- (6);
\end{scope}

\begin{scope}[shift =(shift3)]
	\node [font=\boldmath] at (0,2) { $(B_3,R_3)$};
	\node[blank] (1) at (0,0) {}; 
	\node[red] (2) at (2,0) {}; 
	\node[red] (3) at (3,2) {}; 
	\node[blue] (4) at (3,-2) {}; 
	\node[blank] (5) at (4,0.5) {}; 
	\node[blank] (6) at (5,-1) {${\sf snk}$}; 
	\draw[to] (1) -- (2);
	\draw[to] (1) -- (3);
	\draw[to] (1) -- (4);
	\draw[to] (2) -- (5);
	\draw[to] (3) -- (5);
	\draw[to] (5) -- (6);
	\draw[to] (4) -- (6);
\end{scope}

\begin{scope}[shift =(shift4)]
	\node [font=\boldmath] at (0,2) { $(B_4,R_4)$};
	\node[blank] (1) at (0,0) {}; 
	\node[blank] (2) at (2,0) {}; 
	\node[blank] (3) at (3,2) {}; 
	\node[red] (4) at (3,-2) {}; 
	\node[red] (5) at (4,0.5) {}; 
	\node[blank] (6) at (5,-1) {${\sf snk}$}; 
	\draw[to] (1) -- (2);
	\draw[to] (1) -- (3);
	\draw[to] (1) -- (4);
	\draw[to] (2) -- (5);
	\draw[to] (3) -- (5);
	\draw[to] (5) -- (6);
	\draw[to] (4) -- (6);
\end{scope}

\begin{scope}[shift =(shift5)]
	\node [font=\boldmath] at (0,2) { $(B_5,R_5)$};
	\node[blank] (1) at (0,0) {}; 
	\node[blank] (2) at (2,0) {}; 
	\node[blank] (3) at (3,2) {}; 
	\node[blank] (4) at (3,-2) {}; 
	\node[blank] (5) at (4,0.5) {}; 
	\node[red] (6) at (5,-1) {${\sf snk}$}; 
	\draw[to] (1) -- (2);
	\draw[to] (1) -- (3);
	\draw[to] (1) -- (4);
	\draw[to] (2) -- (5);
	\draw[to] (3) -- (5);
	\draw[to] (5) -- (6);
	\draw[to] (4) -- (6);
\end{scope}
\end{tikzpicture} 
  \caption{Example of a red-blue pebbling that uses at most $2$ simultaneous red pebbles}
  \label{fig:red-blue}
\end{figure}
In case of red-blue pebbling, we consider a sequence of pebbling configurations $\mathcal{RB}=((B_0,R_0),(B_1,R_1),\cdots,(B_t,R_t))$. As before, the game is played in rounds. $B_i\subseteq V$ denote the set on which blue pebbles are placed. $R_i\subseteq V$ denote the set on which red pebbles are placed. Initially, no node is pebbled; $R_0\cup B_0=\emptyset$. The final goal is to place a red pebble on every sink node; ${\sf snk}(V)\subseteq \cup_i R_i$. The legal pebbling rule with $m$ red pebbles is in every round $i>0$; it should hold that
\begin{itemize}
\item ${\sf pred}(R_i\setminus (R_{i-1}\cup B_{i-1}))\subseteq R_{i-1}$.
\item $B_i\setminus B_{i-1} \subseteq R_{i-1}$
 \item $|R_i|\leq m$. 
\end{itemize}

We say a pebbling sequence $\mathcal{RB}=((B_0,R_0),(B_1,R_1),\cdots,(B_t,R_t))$ is sequential if in addition to the above conditions, it holds that $|R_i\setminus R_{i-1}| \leq 1$ for $0<i\leq t$. The number of Blue moves and Red moves at round $i$ are defined by the following  $BM_i$ and $RM_i$ respectively
\begin{align*}
  BM_i\eqdef |\{v \in R_i\setminus R_{i-1}: {\sf pred}(v) \not\subset R_{i-1}\}|+ |B_i\setminus B_{i-1}|\\
  RM_i\eqdef | R_i\setminus R_{i-1}|-|\{v \in R_i\setminus R_{i-1}: {\sf pred}(v) \not\subset R_{i-1}\}|
\end{align*}

\begin{definition}
  Let $\mathcal{RB}=((B_0,R_0),(B_1,R_1),\cdots,(B_t,R_t))$ be a red-blue pebbling. 
  For fixed $c_b$ and $c_r$, we define the energy cost of $\mathcal{RB}$ is defined as
  \begin{align*}
    cost(\mathcal{RB})\eqdef \sum_{i=1}^t c_b\cdot BM_i+ c_r\cdot RM_i
  \end{align*}
  
  Given a DAG $G$ and the number of red pebble $m$, we define the red blue pebbling cost of a graph as
  \begin{align*}
    rbcost(G,m)\eqdef \mbox{min}_{\mathcal{RB}\in \mathsf{RB}(G,m)}  cost(\mathcal{RB})
  \end{align*}
  where $ \mathsf{RB}(G,m)$ is the set of all red-blue pebbling of $G$ with $m$ red pebbles.
\end{definition}

Figure \ref{fig:red-blue} shows an example of red blue pebbling for a graph with $5$ nodes that uses a maximum of $2$ simultaneous red pebbles.

\subsubsection{Graph based Bandwidth-Hard Functions in the random permutation model}
\label{sec:graph-based-bandw}
For a DAG $G=(V,E)$ with $|V|=2^w$ nodes and $ind(v)\leq \gamma$ and with a set of sink nodes ${\sf snk}(V)$ and a random permutation $\pi:\booln\to\booln$, we define the labeling function of the graph $G$ with respect to an input $x=(x_1,x_2,\cdots,x_{n_s})\in \bool^{n_sn}$ as $\sf{lab}_{G,\pi,x}\colon V\to \bool^{n}$ which is recursively defined as
\begin{align*}
  \sf{lab}_{G,\pi,\tau,x}(v)=\twopartdef{\pi(x_i)\xor x_i}{v\mbox{ is the } i^{th}\mbox{ source }}{\tau(\sf{lab}_{G,\pi,\tau,x}(v_1),\sf{lab}_{G,\pi,\tau, x}(v_2),\cdots \sf{lab}_{G,\pi,\tau,x}(v_\gamma))} {ind(v)>0}
\end{align*}
where $\tau^\pi\colon \bool^{\gamma n}\to\booln$ is the labeling function for non-source nodes, ${\sf pred}(v)=\{v_1,\cdots,v_\gamma\}$. We define $f_{G,\pi,\tau}(x)=\{ \sf{lab}_{G,\pi,\tau,x}(v_s)\}_{v_s\in {\sf snk}(V)}$ as the graph function.

\section{Bandwidth-Hard Functions in the random permutation model.}
\label{sec:bandw-hard-funct}

In this section, we construct a family of graph-based bandwidth-hard functions from $n$-bit permutation. Specifically, we show that the labelling functions created in the previous section are bandwidth-hard. Given $\pi\in\RP$ and a graph $G$, we defined the non source nodes labeling function $\tau(x_1,x_2)\eqdef\pi(x_1\xor x_2)\xor x_1\xor x_2$. If a non-source node has a single predecessor, then $\tau(x)=\pi(x)\xor x$.

The construction can be generalized for DAGs with maximum indegree $\delta>2$. Our proofs are done for a constant $\delta$. We note however, the black pebbling proved in \cite{C:CheTes19} considered $\delta=2$. We say an input vector $x=(x_1,\cdots,x_{n_s})$ is non-colliding if for all distinct $i,j$ we have $x_i\neq x_j$.

\begin{theorem}
  \label{thm:smallblock}
  Consider a random permutation $\pi$ over $\booln$.  Fix a $\delta$-indegree predecessor distinct DAG $G=(V,E)$  Assume an adversary can make no more than $q$ oracle calls  output calls such that $q=2^{n/10\delta}$. Consider the graph function $f_{G,\pi,\tau}$ with $\tau$ being the non-source nodes' labelling function. Moreover, suppose the input to $f_{G,\pi,\tau}$ is non-colliding. Then there exists $\epsilon\in (0,1]$ such that  if $|V|\leq 2^{n/4\delta}$, it holds that
  \begin{align*}
          ecost(f_{G,\pi,\tau},mn) \geq \frac{\epsilon}{40\delta}rbcost(G,20\delta m)-\frac{\epsilon mc_b}{2}. 
      \end{align*}
\end{theorem}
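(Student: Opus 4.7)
The plan is to transform any evaluation algorithm $\alA^\pi(x;r)$ into a legal red-blue pebbling $\mathcal{RB}$ of $G$ using at most $20\delta m$ red pebbles, in such a way that $cost(\mathcal{RB})$ can be charged against $cost(\alA^\pi(x;r))$ up to the constants and additive term in the statement. The starting point is the Chen--Tessaro style analysis of the trace: examine round-by-round which permutation queries are ``fresh'' in the sense that a forward query $\pi(\oplus_{j} x_j)$ is issued in round $i$ where the $\delta$ inputs are precisely the labels of the predecessors of some $v\in V$ that were produced as outputs in earlier rounds. Such an event identifies round $i$ as the one that \emph{computes} the label of $v$ and motivates placing a red pebble on $v$ during that round.

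Next, I would coarsen time into intervals $(t_{k-1}, t_k]$ by cutting whenever the total cache$\leftrightarrow$memory traffic accumulated since the last cut first reaches $mn$ bits. By the definition of $ecost$ each interval contributes at least $c_b \cdot mn$ to the energy cost, so it suffices to prove that the red-blue pebbling I construct spends at most $O(\delta m)$ moves per interval. Within each interval I would maintain two pools of red pebbles: $R_k^{old}$ consists of the (at most $10\delta m$) pebbles carried over on vertices whose labels actually remain in the cache, while $R_k^{move}$ accommodates up to $10\delta m$ vertices whose labels are promoted from blue to red by a simulated cache load (charged to the $mn$ transferred bits). New red pebbles on vertices whose labels are freshly \emph{computed} during the interval via a forward permutation query land in $R_k^{move}$ and are counted as $RM$; legality then follows because each such $v$ requires its $\delta$ predecessors to be labelled, and those predecessor labels sit either in $R_k^{old}$ or in $R_k^{move}$ at the appropriate point. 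Blue moves at the cut points account for the remaining labels the pebbling needs to keep alive, and are likewise at most $10\delta m$.

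The main obstacle, and the place where the novelty over \cite{C:CheTes19} lies, is proving that in \emph{every} interval at most $10\delta m$ distinct vertex labels can appear as ``fresh computations'' — otherwise the partitioning above fails. Concretely I would set up a predictor algorithm in the style of Lemma~\ref{lemma:old} and \cite{CCS:BloRenZho18}: conditioning on a trace where some interval produces more than $10\delta m$ fresh labels while transferring fewer than $mn$ bits, one extracts from the cache contents at the start of the interval (together with a short hint naming which $\pi$-queries to look at and the non-colliding input) a procedure that predicts more bits of the permutation table than $|H|/2^k$ permits. The key quantitative step is that each freshly computed label $\pi(\oplus x_j) \oplus \oplus x_j$ on a vertex $v$ yields, via the predecessor-distinct property and the non-colliding assumption, $n$ fresh bits of $\pi$ that are not encoded in the cache snapshot; because $\tau$ does not include the identity of $v$ in the preimage, I would use the predecessor-distinctness of $G$ and the bound $q\leq 2^{n/10\delta}$ to rule out query collisions that could otherwise leak the label for free, and the bound $|V|\leq 2^{n/4\delta}$ to keep the union bound over candidate hints under control. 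Handling the inverse queries requires the predictor to defer labels that might be obtained through $\pi^{-1}$, which costs the constant $\delta$ factor but preserves the overall linear compression. Putting the pieces together yields $ecost(f_{G,\pi,\tau},mn)\geq \tfrac{c_b\,mn}{20\delta m}\cdot rbcost(G,20\delta m)/c_b$ up to the $\epsilon$ success probability normalization and the $-\epsilon m c_b/2$ boundary term that accounts for the final, possibly incomplete, interval.
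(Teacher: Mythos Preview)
Your proposal inverts the paper's logic in a way that breaks the compression argument. The paper does \emph{not} partition the timeline by accumulated cache--memory traffic; it partitions by the size of the \emph{critical set} $\mbox{\bf Critical}(j,t_i)$, defined combinatorially from the extracted black pebbling as the set of vertices whose labels are \emph{used} (as predecessor inputs) in the interval but are \emph{not} recomputed there. The cut occurs when this set first exceeds $(10\delta-1)m$. The predictor (Lemma~\ref{lemma:predictor}) then shows that any interval with a large critical set forces at least $mn$ bits of transfer, because the critical labels are precisely the permutation values the adversary ``knows'' without querying, and hence the values the predictor can output without touching $\pi$. The red-blue pebbling is then built so that $R_j^{move}=\mbox{\bf Critical}(t_i,t_{i+1})$ (blue-to-red loads) and $R_j^{old}$ holds labels computed \emph{during} the interval that remain critical for its remainder; both are bounded by $10\delta m$ by construction of the cut, not by any appeal to ``what is actually in the cache.''

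Your predictor, by contrast, is aimed at bounding the number of \emph{fresh computations} per interval. But a fresh computation of ${\sf lab}(v)$ is by definition an explicit query $\pi({\sf prelab}(v))$, so there is nothing to predict there---the predictor would simply be re-asking the oracle. The compressible quantity is the set of labels the adversary manages to \emph{use} without querying, i.e., the critical set, and your outline never isolates this set. Relatedly, your definition of $R_k^{old}$ as ``vertices whose labels actually remain in the cache'' is not well-defined: the cache holds $mn$ arbitrary bits under adversarial encoding, and the entire point of the compression argument is that one cannot read off which $\geq(10\delta-1)m$ labels are implicitly represented. Finally, you place freshly computed labels in $R_k^{move}$ and count them as $RM$, but $R^{move}$ in the paper is the blue-to-red pool (charged as $BM$), while fresh computations feed $R^{old}$; this swap would make the legality check and the cost accounting fail.
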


% \begin{definition}
%   For a random permutation $\pi$ over $\booln$, we say $f_C=\{f^{\pi}\colon \bool^{2n}\to\bool^{n}\}_{\\pi\sample\PI}$ is \emph{pebbling reducible compression function} if it has the following property
%     \begin{itemize}
%     \item{$\beta$-red-blue pebbling reducibility} For any $\varepsilon\in (0,1]$ and $2$-indegree predecessor distinct DAG $G=(V,E)$, let $\mathcal{F}_{G,f_{\mbox{comp}}}$ be the graph function built upon $G$ and $f_{\mbox{comp}}$. We have
%       \begin{align*}
%         eCost_{\varepsilon}(\mathcal{F}_{G,f_{\mbox{comp}}}) \geq \beta (\varepsilon,\log |V|)\cdot \mbox{rbpebb}(G)
%       \end{align*}
%       where $\mbox{rbpebb}(G)$ is the red-blue pebbling complexity of $G$.
%     \end{itemize}
% \end{definition}

% \noindent\underline{\textsc{Constructions.}} Next, we show how to construction pebbling reducible compression functions from ideal primitives including random permutation. 

\subsection{Proof of Theorem~\ref{thm:smallblock}}
\label{sec:proof-theorem-smallblock}
\noindent\underline{Label Notations}.
Fix an input vector $x$ and the underlying DAG $G=(V,E)$. For any node $v\in V$, we denote by ${\sf lab}(v)$ the graph label of $v$. For every node $v$, by the term pre-label of $v$ ( denoted by ${\sf prelab}(v)$) we define the input of the $\pi$ query to compute ${\sf lab}(v)$. If $v$ is a source, then by definition ${\sf prelab}(v)=x_v$. For a non-source node $v$, ${\sf prelab}(v)= \xor_{i=1}^{d(v)} {\sf lab}(v_i)$ where $v_i\in \mathsf{pred}(v)$ are the predecessors of $v$. For every node $v\in V$, we define ${\sf postlab}(v)=\pi({\sf prelab}(v))$. By construction, ${\sf lab}(v)={\sf postlab}(v)\xor {\sf prelab}(v)$. 

\noindent{\textbf{Labels of the nodes are distinct.}} The first property we need from the labelling is that all the node labels are distinct during the evaluation of the labelling function for a given input. Thus looking at the label, we can identify the corresponding node.
 \begin{myclaim}
    \label{claim:nocoll}
    Suppose $G=(V,E)$ be a DAG with $n$ vertices such that ${\sf pred}(u)\neq {\sf pred}(v)$ for all distinct $u,v\in V$. Let ${\tt Coll}$ denote the event that there exists two distinct nodes $u,v$ with ${\sf lab}(u)={\sf lab}(v)$ or ${\sf prelab}(u)={\sf prelab}(v)$ during one evaluation. It holds that
    \begin{align*}
      \Pr_\pi\left[{\tt Coll}\right] \leq \frac{2|V|^2}{2^n}
    \end{align*}
  \end{myclaim}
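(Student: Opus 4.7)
The plan is to evaluate the labelling in topological order and use a lazy-sampling argument for $\pi$. Fix a topological ordering $v_1,\dots,v_N$ of $V$ where $N=|V|$. The honest evaluation makes at most $N$ queries to $\pi$, one per node: ${\sf lab}(v_i)={\sf prelab}(v_i)\oplus\pi({\sf prelab}(v_i))$. We sample $\pi$ lazily: whenever a fresh input $z$ is queried, $\pi(z)$ is uniform over the at least $2^n-N$ still-unused outputs. Let $F_i$ denote the event that no two nodes among $\{v_1,\dots,v_i\}$ share a pre-label or a label. Since $F_0$ holds trivially, it suffices to show $\Pr[\bar F_i\mid F_{i-1}]\le \frac{2(i-1)}{2^n-N}$; summing and using $|V|\le 2^{n/4\delta}\le 2^{n-1}$ then yields $\Pr[\bar F_N]\le \frac{N(N-1)}{2^n-N}\le \frac{2|V|^2}{2^n}$, as claimed.

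Conditioned on $F_{i-1}$, a new collision at step $i$ falls into one of two categories. First, a \emph{label collision without pre-label collision}: ${\sf prelab}(v_i)\notin\{{\sf prelab}(v_1),\dots,{\sf prelab}(v_{i-1})\}$ but ${\sf lab}(v_i)={\sf lab}(v_j)$ for some $j<i$. Because ${\sf prelab}(v_i)$ is a fresh query, $\pi({\sf prelab}(v_i))$ is uniform over at least $2^n-N$ values; hence for each fixed $j$, $\Pr[\pi({\sf prelab}(v_i))={\sf prelab}(v_i)\oplus{\sf lab}(v_j)]\le 1/(2^n-N)$, and a union bound over $j$ yields $(i-1)/(2^n-N)$.

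Second, a \emph{pre-label collision}: ${\sf prelab}(v_i)={\sf prelab}(v_j)$ for some $j<i$. For a source-source pair, distinctness follows from the non-colliding input assumption, so we focus on the case where at least one of $v_i,v_j$ is a non-source. Writing each pre-label as $\bigoplus_{w\in{\sf pred}(\cdot)}{\sf lab}(w)$ (or $x_v$ for a source), the collision equation rewrites as $\bigoplus_{w\in S}{\sf lab}(w)=c$ for $S={\sf pred}(v_i)\triangle{\sf pred}(v_j)$, which is non-empty by the hypothesis ${\sf pred}(v_i)\neq{\sf pred}(v_j)$, and for a constant $c$ determined by any source contributions. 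Let $w^*\in S$ be the topologically latest node; every other $w\in S\setminus\{w^*\}$ then satisfies $w\prec w^*$. Conditioning on the history just before the computation of ${\sf lab}(w^*)$, all labels $\{{\sf lab}(w):w\in S\setminus\{w^*\}\}$ together with ${\sf prelab}(w^*)$ are fixed, so the equation reduces to ${\sf lab}(w^*)=c'$ for some fixed $c'$. Under $F_{i-1}$, ${\sf lab}(w^*)={\sf prelab}(w^*)\oplus\pi({\sf prelab}(w^*))$ with $\pi({\sf prelab}(w^*))$ uniform over at least $2^n-N$ unused outputs, so $\Pr[{\sf lab}(w^*)=c']\le 1/(2^n-N)$. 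Summing over $j<i$ yields $(i-1)/(2^n-N)$, and combined with the first case we obtain the required $2(i-1)/(2^n-N)$ bound.

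The main obstacle is the pre-label case: we must ensure that the conditioning argument isolates exactly one fresh permutation output, despite the fact that pre-labels are complicated functions of many previous labels. Choosing $w^*$ as the topologically latest element of the XOR equation is precisely what makes every other summand deterministic at the conditioning point, so the collision probability collapses to that of a single fresh random-output equality, bounded by $1/(2^n-N)$. The source versus non-source bookkeeping is then routine: source-source pairs are excluded by the non-colliding input assumption, while pairs involving a single source are handled by the same $w^*$ strategy applied to the non-source side's predecessor set.
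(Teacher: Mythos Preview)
Your argument is correct and follows the same strategy as the paper's: process the nodes in topological order, lazily sample $\pi$, and at each step bound the new label-collision and pre-label-collision probabilities by isolating the topologically last label appearing in the relevant equation. Your use of the symmetric difference ${\sf pred}(v_i)\triangle{\sf pred}(v_j)$ to select $w^*$ is in fact a slightly cleaner version of the paper's choice of ``the last predecessor of $v_{m+1}$'', since it transparently handles the case where $v_i$ and $v_j$ share predecessors (so that the shared labels cancel and cannot supply the needed fresh randomness).
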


% \subsubsection{Label Notations}
% \label{sec:label-notations}

% Fix an input vector $x$ and the underlying DAG $G=(V,E)$. For any node $v\in V$, we denote by ${\sf lab}_v$ the graph label of $v$. For every node $v$ by ${\sf prelab}(v)$ we denote the input of the $\pi$ query to compute ${\sf lab}_v$. If $v$ is a source, then by definition ${\sf prelab}(v)=x_v$. For a non-source node $v$, ${\sf prelab}(v)= \xor_{i=1}^{d(v)} {\sf lab}_{v_i}$ where $v_i\in pred(v)$ are the predecessors of $v$. For every node $v\in V$, we define ${\sf postlab}(v)=\pi({\sf prelab}(v))$. By construction, ${\sf lab}_v={\sf postlab}(v)\xor {\sf prelab}(v)$.
% \textcolor{red}{pred and d(v) needs to be defined}

\subsubsection{Extending Black Pebbling to Red-Blue Pebbling}
\label{sec:extension-pebbling}
 Consider the black pebbling guaranteed by \cite{C:CheTes19}.
We start with the notion of extension red-blue pebbling in the ideal primitive model. Given a DAG $G$ and a legal black pebbling $P=(P_0,\cdots,P_t)$ with $|P_{i+1}\setminus P_i| \leq m$, we say that a (legal) red-blue pebbling is a $k$-extension of $P$ if $\forall i\in [t]$, we can find a small $E_i\subseteq V$ such that $|E_i|\leq k$, $P_i\subseteq B_i\cup R_i$, and in particular $P_i\cup E_i= B_i\cup R_i$.

\subsubsection*{Correct and Critical Calls}
\label{sec:corr-crit-calls}
The idea of the correct call is to point out the query corresponding to the evaluation of a node. We note however in \cite{C:CheTes19}, a correct call for a node $v$ in round $i$ does not automatically pebble node $v$ in round $i$. A query to a node leads to the node getting pebbled only if it is a sink node or its label is used in the future. The notion of critical query captures this idea.

In the random permutation model, the algorithm $\alA$ can make two types of queries to the permutation oracle. For a forward query $\pi(x)$, we say it is a \emph{correct} call for a node $v\in V$ if and only if it holds that $x={\sf prelab}(v)$. Similarly, the query $\pi^{-1}(x)$ is correct call for vertex $v$ if and only if $x={\sf postlab}(v)$.\\

Now we define critical calls. A forward query $\pi(x)$ is critical for a node $u\in V$ if and only if  $\exists v\in {\sf succ}(u)$ such that ${\sf prelab}(v)=x$ and $\alA$ has made no correct call for $u$ after round $i$. Similarly an inverse query $\pi^{-1}(y)$ is critical for a node $u\in V$ in round $i'>i$ if and only if  $\exists v\in {\sf succ}(u)$ such that ${\sf prelab}(v)=x$ and $\alA$ has made no correct call for $u$ after round $i$.
Additionally, if $v$ is a sink, then the first correct call for node $v$ is critical for $v$.

In the following paragraph, we extend it to critical pebbling. 
A pebbling $P_i$ at node $(v)$ is critical for $u$ with interval $[t_1,t_2]$ if
\begin{itemize}
\item  $v\in {\sf succ}(u)$.  
\item $\pi({\sf prelab}(v))$ or $\pi^{-1}({\sf postlab}(v))$ is queried at round $i$.
 \item  $\forall j\in [t_1,i-1]$ no correct call for $u$ is made.
 \end{itemize}
 We put the node $u$ in the set $\mbox{\bf Critical}(t_1,t_2)$. Formally, given a black pebbling $P$ and an interval $(t_1,t_2)$, we define
\begin{align*}
  \mbox{\bf Critical}(t_1,t_2)\eqdef \cup_{i=t_1}^{t_2} \left(\mbox{parents}(P_i\setminus P_{i-1})\setminus \left(\cup_{j=t_1}^{i-1} P_j\setminus P_{j-1}\right) \right)
\end{align*}

Suppose $u$ is a node in the set $\mbox{\bf Critical}(t_1,t_2)$ but $u\notin \mbox{\bf Critical}(t'_1,t_2)$ with $t'_1< t_1$. This implies that there is a correct call for $u$ in some round $j$ with $t'_1\leq j <t_1$.  

\subsubsection*{Partitioning Intervals}
\label{sec:part-interv}

Now we partition the $t$ pebbling rounds into intervals $(t_0=0,t_1],(t_1,t_2],\cdots$
recursively as follows. Let $t_1$ be the minimum pebbling round such that there exists $j<t_1$ such that $|\mbox{\bf Critical}(j,t_1)| > (10\delta-1)m$ . If no such $t_1$ exists, then we conclude that $|\mbox{\bf Critical}(t_0,t)| \leq (10\delta-1)m$. In that case, we propose a red-blue extension-pebbling that requires $0$ blue move and at most $\sum_i |P_i\setminus P_{i-1}|$ red moves.

Now, once we have define $t_1,t_2,\cdots,t_{i-1}$, we define $t_{i}$ to be the minimum index such that there exists $t_{i-1}<j<t_i$ such that $|\mbox{\bf Critical}(j,t_i)| \geq (10\delta-1)m$. If no such $t_i$ exists then we set $t_i=t$ and conclude the partition. Note that the size of all the Critical sets for the last partition is less than $(10\delta-1)m$.\\

\subsubsection*{Red-Blue Pebbling}
\label{sec:red-blue-pebbling}
   
Now we shall construct an extension pebbling that makes at most $10\delta m$ red moves and $10\delta m$ blue moves during each interval. Towards this we define the extension red blue pebbling $(B^*,R^*)$ by dividing the cache into two sets of size $10\delta m$ denoted as $R_i^{old}$ and $R_i^{move}$ respectively.  The set $R_i^{old}$ will be kept in the cache, whereas the set $R_i^{move}$ will be brought from memory to the cache. We will set  $R_i=R_i^{move}\cup R_i^{old}$ and $B_i=P_i$. Note that a node may contain both red and blue pebbles simultaneously as the same data may be both in the cache and the memory.

\noindent At the start of each interval $(t_i,t_{i+1}]$, we set $R_i^{old}=\emptyset$. For each $j\in (t_i,t_{i+1}] $, we define
\begin{align*}
  R_j^{old}=(R_{j-1}^{old}\cup (P_j\setminus P_{j-1}))\cap \mbox{\bf Critical}(j,t_i)
\end{align*}

Intuitively, $R_j^{old}$ stores all of the red-pebbles we compute during the interval $(t_i,j]$ that are needed in the interval $[j+1,t_{i+1})$. In other words any node whose label is computed during rounds $(t_i,j]$ that are later needed for the interval $(j+1,t_{i+1})$ will be in $R_j^{old}$, which will be kept in cache. The following claim holds directly from the definition.

\begin{claim}
  For any $j\in (t_i,t_{i+1})$ it holds that
  \begin{align*}
    \mbox{\bf Critical}(j+1,t_{i+1}) \cap (\cup_{j'=t_i}^{j} (P_{j'}\setminus P_{j'-1})) \subseteq R_j^{old}
  \end{align*}
\end{claim}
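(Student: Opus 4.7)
The plan is to prove the claim by induction on $j$, unwinding the recursion $R_j^{old}=(R_{j-1}^{old}\cup(P_j\setminus P_{j-1}))\cap\mbox{\bf Critical}(\cdot,\cdot)$ round by round within the interval $(t_i,t_{i+1}]$. At the start of the interval the initialization $R_{t_i}^{old}=\emptyset$ makes the base case immediate: only pebbles placed in the first round of the interval can appear in the union on the left, and these are exactly the nodes inserted by the update rule and filtered through the Critical set.

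For the inductive step I would fix a node $v$ in the left-hand side at round $j$, so that $v\in\mbox{\bf Critical}(j+1,t_{i+1})$ and $v\in P_{j'}\setminus P_{j'-1}$ for some $j'\in[t_i,j]$, and split into two cases. If $j'=j$, then $v$ is freshly placed at round $j$, hence $v\in R_{j-1}^{old}\cup(P_j\setminus P_{j-1})$ directly, and the criticality condition $v\in\mbox{\bf Critical}(j+1,t_{i+1})$ lets $v$ survive the intersection, giving $v\in R_j^{old}$. If $j'<j$, I apply the inductive hypothesis at round $j-1$ to obtain $v\in R_{j-1}^{old}$; this step uses the observation that $v\in\mbox{\bf Critical}(j+1,t_{i+1})$ implies $v$ also lies in the Critical set for the longer suffix $(j,t_{i+1}]$, because enlarging the suffix window by one round on the left can only add $v$ to the Critical set (it does not grow the set of ``removed'' intermediate placements subtracted in the definition of $\mbox{\bf Critical}$) whenever $v$ is not itself re-placed at round $j$. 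Here $v$ was already placed at $j'<j$ and is, in this case, not among the new placements at round $j$, so the preceding observation applies and the update rule preserves $v$ into $R_j^{old}$.

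The main obstacle is purely bookkeeping: keeping the $\mbox{\bf Critical}$ index ranges aligned with the intuitive meaning (``pebbles computed in $(t_i,j]$ still needed in $(j,t_{i+1}]$'') across the inductive step, particularly at the interval boundary where $R^{old}$ is reset and at the transition from the filter of $R_{j-1}^{old}$ to that of $R_j^{old}$. No real combinatorial content is required beyond the recursive definition of $R_j^{old}$ and the monotonicity of $\mbox{\bf Critical}$ in its left endpoint for nodes that are not re-placed, which is why, as the authors indicate, the claim follows directly from the definition.
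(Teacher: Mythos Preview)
Your induction on $j$ is correct and is exactly the natural unpacking of what the paper means by ``holds directly from the definition''; the paper gives no further argument. One small point of hygiene: your case split should be phrased as ``$v\in P_j\setminus P_{j-1}$ versus $v\notin P_j\setminus P_{j-1}$'' rather than ``$j'=j$ versus $j'<j$'', since a node may be placed at several rounds and you need $v\notin P_j\setminus P_{j-1}$ precisely to invoke the monotonicity of $\mbox{\bf Critical}$ in its left endpoint---but your text already implicitly assumes this in the second case, so the argument goes through.
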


To maintain legality across all time steps, we add a few rules about blue moves.
\begin{enumerate}
\item A pebbled node $v\in R_j$ from red to blue at time $j$ if node $v$ is in $\mbox{\bf Critical}(t_i, t_{i+1})$ for some later interval $(t_i, t_{i+1})$ with $j<t_i$ and if $v\notin B_j$ is not already stored in memory.  Note, we will consider the cost $c_b$ of this blue move for the corresponding future interval $(t_i, t_{i+1})$.
\item A pebbled node $v$ is converted from blue to red if $v\in \mbox{\bf Critical}(t_i, t_{i+1})$.  In other words, we define $R_j^{move}=\mbox{\bf Critical}(t_i, t_{i+1})$.

\end{enumerate}
Now we are ready to bound $|\mbox{\bf Critical}(t_i, t_{i+1})|$.
\begin{lemma}
  \label{lemma:size}
  \begin{align*}
    \forall~j\in(t_{i},t_{i+1}); \lvert\mbox{\bf Critical}(j, t_{i+1})\rvert\leq 10\delta m
  \end{align*}
\end{lemma}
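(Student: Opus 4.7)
The plan is to exploit the minimality of $t_{i+1}$ in the partitioning rule, and then quantify the (small) incremental contribution coming from the very last round of the interval. By construction, $t_{i+1}$ is the smallest round past $t_i$ for which some $j^\ast \in (t_i, t_{i+1})$ satisfies $|\mbox{\bf Critical}(j^\ast, t_{i+1})| \geq (10\delta-1)m$. Consequently, the round $t_{i+1}-1$ itself did \emph{not} trigger partitioning, which means that for every $j \in (t_i, t_{i+1}-1)$ I already have a one-round-earlier bound $|\mbox{\bf Critical}(j, t_{i+1}-1)| < (10\delta - 1)m$. This ``free'' bound is the starting point of the argument.

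The second step is to compare $\mbox{\bf Critical}(j, t_{i+1})$ with $\mbox{\bf Critical}(j, t_{i+1}-1)$ using the explicit definition $\mbox{\bf Critical}(t_1,t_2) = \bigcup_{i=t_1}^{t_2}\bigl(\mbox{parents}(P_i \setminus P_{i-1}) \setminus \bigcup_{j'=t_1}^{i-1}(P_{j'} \setminus P_{j'-1})\bigr)$. Enlarging the outer union to include the new inner index $i = t_{i+1}$ adds at most the set $\mbox{parents}(P_{t_{i+1}} \setminus P_{t_{i+1}-1})$. Combined with the parallel-pebbling budget $|P_{t_{i+1}} \setminus P_{t_{i+1}-1}| \leq m$ assumed on the underlying black pebbling and the indegree bound $\delta$, this extra contribution has size at most $\delta m$.

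Putting the two steps together yields the claimed $|\mbox{\bf Critical}(j, t_{i+1})| \leq 10\delta m$ (specifically one gets $<(10\delta-1)m + \delta m$, which is within the budget). The boundary case $j = t_{i+1}-1$ is handled separately and is immediate, since only a single round contributes and $|\mbox{\bf Critical}(t_{i+1}-1, t_{i+1})| \leq \delta m$. The one point requiring care is checking that the exclusion term $\bigcup_{j'=j}^{t_{i+1}-1}(P_{j'} \setminus P_{j'-1})$ in the new contribution does not double-count anything already inside $\mbox{\bf Critical}(j, t_{i+1}-1)$; this is immediate from the definition because the two parts of the union correspond to disjoint values of the outer index $i$. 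I do not foresee any deep obstacle: the lemma is essentially a clean bookkeeping consequence of how $t_{i+1}$ was chosen, and the only reason the proof is not a one-liner is the need to isolate the boundary case $j = t_{i+1}-1$ and to verify carefully that the increment coming from round $t_{i+1}$ never exceeds $\delta m$.
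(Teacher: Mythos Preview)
Your approach mirrors the paper's exactly: use the minimality of $t_{i+1}$ to get $|\mbox{\bf Critical}(j, t_{i+1}-1)| \leq (10\delta-1)m$, then bound the additional contribution coming from round $t_{i+1}$ and add the two. The boundary case $j=t_{i+1}-1$ that you single out is handled implicitly in the paper but is indeed trivial, so that part is fine.

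There is, however, an arithmetic slip in your final step. You bound the increment by $\delta m$ (via $|P_{t_{i+1}} \setminus P_{t_{i+1}-1}| \leq m$ together with the indegree bound $\delta$), but then
\[
(10\delta-1)m + \delta m \;=\; (11\delta-1)m,
\]
which strictly exceeds $10\delta m$ whenever $\delta \geq 2$. So the assertion ``which is within the budget'' does not follow from what you have written. The paper's proof avoids this by asserting the sharper bound $|{\sf pred}(P_{t_{i+1}} \setminus P_{t_{i+1}-1})| \leq m$ directly from the cache-size constraint (``as cache size is bounded by $m$''), which yields $(10\delta-1)m + m = 10\delta m$ on the nose. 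To close the gap you should either justify this sharper $\leq m$ bound on the predecessor set as the paper does, or, if you prefer the safer $\delta m$ estimate, state explicitly that the resulting constant is $(11\delta-1)m$ rather than $10\delta m$ and carry that harmless adjustment through the downstream red-pebble budget.
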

\begin{proof}
  By construction of the interval, $|\mbox{\bf Critical}(j, t_{i+1}-1)|\leq (10\delta -1)m$. As cache size is bounded by $m$, $|{\sf pred}(P_{t_{i+1}}\setminus P_{t_{i+1}-1})|\leq m$. Thus  $\lvert\mbox{\bf Critical}(j, t_{i+1})\rvert\leq 10\delta m$. \qed
\end{proof}

\begin{lemma}
  $R_i$ is a legal pebbling.
\end{lemma}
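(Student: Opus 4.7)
The plan is to verify that the sequence $\{(B_i,R_i)\}_{i=0}^{t}$ satisfies the three legality conditions of a red-blue pebbling with $20\delta m$ red pebbles: the size bound $|R_i|\leq 20\delta m$, the red-placement rule ${\sf pred}(R_i\setminus(R_{i-1}\cup B_{i-1}))\subseteq R_{i-1}$, and the blue-placement rule $B_i\setminus B_{i-1}\subseteq R_{i-1}$. I would check each in turn, working interval by interval $(t_a,t_{a+1}]$, and in each step tie the newly placed pebbles back to either the definition of $R_j^{old}$, the definition $R_j^{move}=\mbox{\bf Critical}(t_a,t_{a+1})$, or to the two blue-move rules introduced immediately before the lemma.

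For the size bound, $|R_i^{move}|\leq 10\delta m$ is exactly Lemma~\ref{lemma:size}. For $R_i^{old}$, observing from its recursive definition that $R_i^{old}\subseteq \mbox{\bf Critical}(j,t_{a+1})$ for the appropriate index, the minimality of $t_{a+1}$ as the first round violating the $(10\delta-1)m$ threshold together with $|P_{t_{a+1}}\setminus P_{t_{a+1}-1}|\leq m$ reproduces the argument of Lemma~\ref{lemma:size} and gives $|R_i^{old}|\leq 10\delta m$, so $|R_i|\leq 20\delta m$.

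For the red-placement rule, I would case-split on $v\in R_i\setminus(R_{i-1}\cup B_{i-1})$. If $v\in R_i^{move}\setminus R_{i-1}^{move}$, then $i=t_a+1$ is an interval boundary; blue-move rule~1 ensures that every node in $\mbox{\bf Critical}(t_a,t_{a+1})$ that is not already in cache was copied to memory at the latest round in which it was red, so $v\in B_{i-1}$ and this case is vacuous. If instead $v\in R_i^{old}\setminus R_{i-1}^{old}$, then the defining recurrence of $R_j^{old}$ forces $v\in P_i\setminus P_{i-1}$, and legality of the underlying black pebbling gives ${\sf pred}(v)\subseteq P_{i-1}$. Each $u\in{\sf pred}(v)$ is then either last pebbled earlier inside the current interval, in which case the claim preceding the lemma yields $u\in R_{i-1}^{old}$, or last pebbled at or before round $t_a$, in which case $u$ witnesses membership in $\mbox{\bf Critical}(t_a,t_{a+1})=R_{i-1}^{move}$; either way $u\in R_{i-1}$.

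The subtle step, and what I expect to be the main obstacle, is the blue-placement rule: since $B_i=P_i$, one has $B_i\setminus B_{i-1}=P_i\setminus P_{i-1}$, and a freshly pebbled black node cannot literally have been in $R_{i-1}$. I would resolve this by the standard device of splitting each round of the extension into two consecutive sub-rounds: a red sub-round, in which the fresh $v$ enters $R$ (legal by the red-placement analysis above), followed by a blue sub-round, in which the blue copy is written from the just-placed red pebble. The same device accommodates the blue-to-red conversions at interval boundaries specified by rule~2, and the red-to-blue copies dictated by rule~1. The remaining routine check is that this sub-round bookkeeping introduces at most one red move and at most one blue move per pair, so the energy cost estimates accompanying the construction are preserved; once this is done, legality of $\{(B_i,R_i)\}_{i}$ follows.
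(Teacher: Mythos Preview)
Your core argument for the red-placement condition is the same as the paper's: the paper shows ${\sf pred}(P_{j+1}\setminus P_j)\subseteq\mbox{\bf Critical}(j,t_{i+1})$ and then splits each such predecessor into either the full-interval critical set $\mbox{\bf Critical}(t_i,t_{i+1})=R_j^{move}$ or a node freshly pebbled inside $(t_i,j]$, hence in $R_j^{old}$. Your case analysis on $R_i^{old}$ reproduces exactly this dichotomy.

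Where you differ is in scope rather than idea. The paper's proof is a single paragraph that verifies only the red-placement rule and declares legality; the size bound is handled by the adjacent Lemma~\ref{lemma:size} and the lemma $|R_j^{old}|\le 10\delta m$, and the blue-placement rule $B_i\setminus B_{i-1}\subseteq R_{i-1}$ is not addressed explicitly. You correctly flag that with $B_i=P_i$ this rule is not literally satisfied for a freshly black-pebbled node, and you resolve it with the standard two-sub-round device (red first, then copy to blue). That repair is sound and does not affect the cost bounds; the paper simply elides it. Likewise, your separate treatment of $R_i^{move}$ at interval boundaries (using blue-move rule~1 to conclude $v\in B_{i-1}$ and make the case vacuous) is a point the paper passes over silently. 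So: same approach at heart, but your write-up is the more careful of the two.
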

\begin{proof}
  We start from the observation that ${\sf pred}(P_{j+1}\setminus P_{j})\subseteq \mbox{\bf Critical}(j, t_{i+1})$. For any $v\in \mbox{\bf Critical}(j, t_{i+1})$, either $v\in \mbox{\bf Critical}(t_i, t_{i+1})$ (thus $v\in R_j^{move}$) or $v$ has been pebbled at some step within the interval $(t_i,j)$ (thus $v\in R_j^{old}$). As $R_j=R_j^{move}\cup R_j^{old}$, we conclude ${\sf pred}(P_{j+1}\setminus P_{j})\subseteq R_j$. Hence all the parent nodes are in the cache, and hence the pebbling is legal. \qed
\end{proof}

\begin{lemma}
  $|R_j^{old}|\leq 10\delta m$. 
\end{lemma}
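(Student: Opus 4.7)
The proof should be a direct consequence of the recursive definition of $R_j^{old}$ together with Lemma~\ref{lemma:size}, and my plan is to reduce it to a one-line containment argument.

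First, I would dispose of the base case. At the start of the interval, $R_{t_i}^{old} = \emptyset$ by construction, so trivially $|R_{t_i}^{old}| = 0 \leq 10\delta m$. This gives the statement at $j = t_i$ and sets up an induction on $j$.

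Next, for the inductive step with $j \in (t_i, t_{i+1}]$, I would unwind the defining equation
\[
R_j^{old} = \bigl(R_{j-1}^{old}\cup (P_j\setminus P_{j-1})\bigr)\cap \mbox{\bf Critical}(j,t_{i+1}),
\]
and simply observe that since the right-hand side is written as an intersection with $\mbox{\bf Critical}(j,t_{i+1})$, we immediately obtain $R_j^{old}\subseteq \mbox{\bf Critical}(j,t_{i+1})$, regardless of what the other operand in the union contributes. (I read the occurrence of ``$\mbox{\bf Critical}(j,t_i)$'' in the defining equation as a typographical slip for $\mbox{\bf Critical}(j,t_{i+1})$, matching the intuition paragraph that says the stored pebbles are those needed in the remainder of the current interval, and matching the statement of Lemma~\ref{lemma:size}.) Applying Lemma~\ref{lemma:size} then gives $|\mbox{\bf Critical}(j,t_{i+1})|\leq 10\delta m$, so $|R_j^{old}|\leq 10\delta m$, completing the induction.

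Since the argument is just ``intersection can only shrink cardinality, and the enclosing set is already bounded by Lemma~\ref{lemma:size},'' I do not expect any substantive obstacle here. The only delicate point is the index consistency in the Critical set inside the recursion; once that is reconciled with Lemma~\ref{lemma:size}, the bound is immediate, and there is no need to reason about the union $R_{j-1}^{old}\cup (P_j\setminus P_{j-1})$ at all beyond noting that the intersection is taken afterwards.
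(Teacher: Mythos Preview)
Your proposal is correct and matches the paper's own proof almost verbatim: the paper simply notes $R_j^{old}\subseteq \mbox{\bf Critical}(j+1, t_{i+1})$ and invokes Lemma~\ref{lemma:size}, which is exactly your ``intersection can only shrink'' observation (you phrase the containment with index $j$ rather than $j+1$, but either reading works). Your diagnosis of the index typo in the definition of $R_j^{old}$ is also consistent with how the paper itself uses that set in the surrounding claims and in this proof.
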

\begin{proof}
 As $R_j^{old}\subseteq \mbox{\bf Critical}(j+1, t_{i+1}) $ and by Lemma \ref{lemma:size}, $|\mbox{\bf Critical}(j+1, t_{i+1})|\leq 10\delta m$, the lemma follows.\qed
\end{proof}

To bound the cost of the above extension pebbling we observe that only cache misses that we need to consider are the ones in $R_{t_i}^{move}$ whose size is $|\mbox{\bf Critical}(t_i, t_{i+1})|\leq 10 \delta m$. Considering their movement from the cache to memory and back to cache, the total cost due to the cache misses is $2\times 10\delta m\times c_b$, which is equal to $20\delta mc_b$ where $c_b$ is the cost of cache to memory data transfer. Thus the total cost of the extension pebbling for the interval $(t_i,t_{i+1}]$ is bounded above by
\begin{align*}
  20\delta mc_b+\sum_{j\in (t_i,t_{i+1}]} c_r (P_{j}\setminus P_{j-1})
\end{align*}

\begin{remark}
  We stress that we put a red pebble on a node only when there is a critical query for that node. This strategy is identical to the labelling principle of \cite{CCS:BloRenZho18}. However, we work with fewer red pebbles as similar to\cite{C:CheTes19}  we work with the pebbling extraction such that the sink nodes are put in the critical set and pebbled as soon as their labels are evaluated.
\end{remark}

\subsection{The Predictor}
\label{sec:predictor}

Next we show that for every interval any algorithm $\alA$ needs to pay $mc_b$ cost for blue moves in addition to $\sum_{j\in (t_i,t_{i+1}]} c_r (P_{j}\setminus P_{j-1})$ cost for blue moves. In other words, we show that for an interval, $\alA$ transfers at least $mn$ bits between the cache and the memory. If such an algorithm exists, then that algorithm can be converted into a predictor for the random permutation $\pi$ resulting in a compression algorithm for a random permutation.
\begin{lemma}
  \label{lemma:predictor}
  Fix $n$, $\delta$-indegree predecessor-distinct DAG $G=(V,E)$ with $|V|\leq 2^{n/8\delta} $ and $n_s$ many source nodes, non-colliding input vector $x\in\bool^{n_s n}$, algorithm $\adv$ (that makes $q \leq 2^{n/8\delta}$ many calls).  Define $Bad$ as the event where all the following conditions are satisfied
  \begin{enumerate}
  \item The pre-labels are distinct.
  \item The red-blue pebbling is legal.
    \item There exists an $i\in \NN$ such that for interval $(t_i,t_{i+1})$, the interval is not the last one, and the algorithm $\adv$ sends less than $mn$ bits between the cache and the memory.  
  \end{enumerate}
  It holds that $\Pr[Bad]\leq 2^{-2mn+1}+ 2^{-n(1-1/4\delta)}$.
\end{lemma}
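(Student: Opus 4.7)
The plan is to prove Lemma~\ref{lemma:predictor} by a compression argument against the random permutation $\pi$, adapting the black-pebbling template of Chen-Tessaro~\cite{C:CheTes19} to both the bandwidth-measure (data transfer between cache and memory) and the possibility of inverse queries. I would split the event $Bad$ into a "low-level collision" part and a "main compression" part and bound each separately.

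First I would bound the probability of collision events among pre-labels (needed for condition~1 to hold meaningfully) and among the inputs/outputs of all $\pi$-queries that $\adv$ and the labelling compute. With at most $|V|+q\le 2\cdot 2^{n/8\delta}$ distinct inputs fed through a uniformly random $\pi$, a union bound over pairs gives a collision probability of $O(2^{2n/8\delta}/2^n)=2^{-n(1-1/4\delta)}$, matching the second term of the claimed bound. Conditioning on no such collisions, I may assume all relevant prelabels/postlabels/labels are pairwise distinct, so each query to $\pi$ probes a well-defined, distinct entry of its function table.

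Under this conditioning, suppose condition~3 fires with witness interval $(t_i,t_{i+1}]$ and inner split $j^\star$ satisfying $|\mbox{\bf Critical}(j^\star,t_{i+1})|\ge(10\delta-1)m$. I would construct a predictor $\mathcal{P}$ that takes as hint $h=(i,\sigma_{t_i},S_{t_i+1}\| \cdots \|S_{t_{i+1}})$, i.e., the interval index, the input cache state at the start of the interval (at most $mn$ bits), and the concatenation of all memory-to-cache messages during the interval (strictly less than $mn$ bits by assumption). Thus $h$ ranges over a set $H$ with $|H|\le t\cdot 2^{2mn}$. The predictor simulates $\adv$ from $\sigma_{t_i}$, supplying memory responses from the hint and forwarding every $\pi$-query to the oracle \emph{except} those queries whose input equals ${\sf prelab}(u)$ (for a forward query) or whose output equals ${\sf postlab}(v)$ for some $v\in{\sf succ}(u)$ (for an inverse query), with $u\in\mbox{\bf Critical}(j^\star,t_{i+1})$. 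On such a critical query, $\mathcal{P}$ instead outputs the expected value as a \emph{guess} and continues the simulation as if the guess were correct. By the assumed distinctness of pre-labels the $(10\delta-1)m$ critical queries probe $(10\delta-1)m$ distinct table entries, so $\mathcal{P}$ effectively guesses $k=(10\delta-1)mn$ bits of the description of $\pi$ that it did not query. Applying Lemma~\ref{lemma:old} yields probability at most $|H|/2^k \le t\cdot 2^{2mn-(10\delta-1)mn}\le 2^{-2mn+1}$ for $\delta\ge 1$ and the stated parameters (since $t\le q\le 2^{n/8\delta}$ is absorbed by $(10\delta-3)mn$). Summing the two contributions proves the lemma.

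The main obstacle is showing that $\mathcal{P}$ can indeed identify critical queries and produce the correct guess \emph{on the fly}, without having previously queried the critical table entry. To identify them, $\mathcal{P}$ uses the hint together with its simulated trace to compute ${\sf prelab}(u)=\bigoplus_{w\in{\sf pred}(u)}{\sf lab}(w)$ for each $u\in\mbox{\bf Critical}(j^\star,t_{i+1})$; the predecessor labels are available because either (i) they lie in the starting cache or in the transferred memory data (both part of $h$), or (ii) they were produced by earlier, non-critical $\pi$-queries during the simulation that $\mathcal{P}$ forwarded to the oracle --- this possibility is exactly what the legality of the induced red-blue pebbling (condition~2) and the definition of $\mbox{\bf Critical}$ guarantee. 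Inverse queries $\pi^{-1}({\sf postlab}(v))$ need extra bookkeeping because the predicted preimage ${\sf prelab}(v)$ is the XOR of labels of \emph{all} predecessors of $v$, one of which is the critical $u$; here one uses the fact that the remaining predecessors must already have been pebbled by round~$j^\star$ (again by legality), so their labels are recoverable from the trace, and ${\sf lab}(u)$ is then the missing XOR summand, which $\mathcal{P}$ outputs as its guess for ${\sf prelab}(v)\oplus \bigoplus_{w\in{\sf pred}(v)\setminus\{u\}}{\sf lab}(w)$.
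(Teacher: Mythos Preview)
Your high-level structure (split off collisions, then compress) matches the paper, but the predictor you describe cannot actually run, because your hint is missing exactly the pieces that make the paper's predictor work.

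Concretely, you give the predictor only the interval index, the cache state $\sigma_{t_i}$, and the memory-to-cache transcript. From this alone the predictor has no way to recognise, on the fly, that a given query $\pi(x)$ is a correct call for some successor $w$ of a critical node $u$: to check $x=\mathsf{prelab}(w)$ it would need the labels of all of $w$'s parents, and those labels are \emph{not} sitting in $\sigma_{t_i}$ or in the transferred bits in any extractable form --- the adversary may encode its state arbitrarily. Nor can the predictor recover the critical set by re-running $\adv$ from round~$0$ using its oracle, because doing so would force it to query $\pi$ at $\mathsf{prelab}(u)$ for every critical $u$ (those correct calls all occur \emph{before} $t_i$), destroying exactly the entries it is supposed to predict. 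The paper solves this by adding to the hint the critical set itself, the sequence $Q$ of critical-query indices, the sequence $W$ of target successors, the sequence $L$ of correct-call indices to be suppressed, and the disambiguation labels $H$ for the case where several critical parents share the same successor query; these cost an extra $10\delta m(2\log|V|+2\log q)+10\delta(1-\tfrac{1}{\delta})mn$ bits, and the final arithmetic only closes because this is still below $(10\delta-1)mn$.

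There is a second, related gap. What you eventually need to output is a list of pairs $(\mathsf{prelab}(u),\pi(\mathsf{prelab}(u)))$ for each critical $u$. Your extraction recovers only $\mathsf{lab}(u)=\mathsf{prelab}(u)\oplus\mathsf{postlab}(u)$ as ``the missing XOR summand'' of $\mathsf{prelab}(v)$; you never explain how the predictor obtains $\mathsf{prelab}(u)$ itself. In the paper this comes from intercepting the (later) correct call for $u$, which is why the hint $L$ is needed. Also, when two critical nodes $u,u'$ are both parents of the same $v$ and share the same first critical query, neither label can be isolated from $\mathsf{prelab}(v)$ alone; this is precisely the circularity the hint $H$ (at most $10\delta m(1-\tfrac{1}{\delta})n$ bits) is there to break. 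Finally, since $\pi$ is a permutation rather than a random bit string, you should invoke the permutation analogue of the compression lemma (the paper's Lemma~\ref{lemma:newold}) rather than Lemma~\ref{lemma:old}.
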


\subsubsection{Proof of Lemma~\ref{lemma:predictor}}
\label{sec:proof-lemma-predictor}

Without loss of generality, we fix the random string $r$ to be the random coins of $\adv$ that maximized $\Pr[Bad]$. We shall show a predictor $\algo{P}$ (that hardwires the random string $r$ and has oracle access to $\pi$.) such that if $Bad$ happens, there will be a hint $h$ of size less than $(10\delta-1)mn$ bits such that $\algo{P}(h)$ can predict the outputs of at least $(10\delta-1)m$ many points of $\pi$ correctly. The bound contradicts the incompressibility of a random permutation, and the lemma will hold. First, we describe the hint $h$  and the predictor $\algo{P}$.

\textsc{The Hints.} Let $i\in\NN$ be the integer such that, for the interval $(t_i,t_{i+1})$, $\adv$ transfers less than $mn$ bits between the cache and the memory. This hint is going to consist of the following components.
\begin{enumerate}
\item The set $\mbox{\bf Critical}(t_i, t_{i+1})$ is given as the points the extractor needs to predict. Recall that $\lvert \mbox{\bf Critical}(t_i, t_{i+1})\rvert\leq 10\delta m$ and hence the first component size of the hint is $10 \delta m\log |V|$. We assume a topological ordering of the vertices in the critical set based on the order of the critical calls. In other words, we order the elements of $\mbox{\bf Critical}(t_i, t_{i+1})$ as $(v_1,v_2,\cdots,)$. 
\item A sequence $Q=(id_1,id_2,\cdots,id_{\lvert \mbox{\bf Critical}(t_i, t_{i+1})\rvert})$ such that for each $v_j\in\mbox{\bf Critical}(t_i, t_{i+1}) $, $id_j$ is the index of the first critical query for $v_j$. This hint component tells the extractor the queries that require the prediction of the output of the random permutation. The total size of this hint is at most $10\delta m\log q$ bits.
\item A sequence $W=(w_1,w_2,\cdots,w_{\lvert \mbox{\bf Critical}(t_i, t_{i+1})\rvert})$ of nodes where such that $id_j$ is a correct call for  $w_j$ and $w_j$ is some successor of $v_j\in \mbox{\bf Critical}(t_i, t_{i+1})$. The total size of this hint is at most $10\delta m\log |V|$.
\item A sequence $L=(id'_1,id'_2,\cdots,id'_{\lvert \mbox{\bf Critical}(t_i, t_{i+1})\rvert})$ such that $id'_j$ is the first correct call for  $v_j\in\mbox{\bf Critical}(t_i, t_{i+1}) $. This hint is required to ensure that the predictor does not query the oracle on these inputs. The size of $L$ is clearly bounded by $10\delta m\log q$. 
\item   $H=\{h_j|v_j \in \mbox{\bf Critical}(t_i, t_{i+1})\}$, where $h_j$ is the label ${\sf lab}({v_j})$ if there exists some  $k>j$ such that $id_k=id_j$, otherwise $h_j$ is empty. Size of $H$ is at most $10\delta m(1-\frac{1}{\delta})n$ bits.
  
\item The cache state at $t_i$ is given as a hint to simulate the attacker from the time slot $t_i$. In addition the messages between the cache and the memory during $(t_i,t_{i+1})$ is also given. The cache size is bounded by $mn$ bits. The total number of bits required for the messages passed between the cache and the memory is assumed to be less than $mn$ bits, as premised in the definition of the bad event.
\end{enumerate}
In total, the size of the hint is less than

\begin{align*}
  10\delta m (2\log |V|+ 2\log q)+  10\delta(1-\frac{1}{\delta})mn +2mn.
\end{align*}

Putting the conditions $|V|\leq 2^{n/8\delta}$ and $q\leq 2^{n/8\delta}$, we bound

\begin{align*}
  2\log |V|+ 2\log q \leq \frac{n}{4\delta} + \frac{n}{4\delta}= \frac{n}{2\delta} 
\end{align*}
Then the size of the hint is less than
\begin{align*}
  5mn + 10\delta mn -10mn+2mn  
\end{align*}
Thus the total hint size is less than $(10\delta-3)mn$.

\textsc{Simulating $\alA$}. Given an input, the predictor $\algo{P}$ recovers the pebbling configuration $P_i$ and parses the input into the hints described above. Then $P$ runs $\alA(\sigma_i)$ and attempts to \emph{ predict} $({\sf prelab}(v),\pi({\sf prelab}(v)))$ for \emph{every} $v\in \mbox{\bf Critical}(t_i, t_{i+1})$. When simulating  $\alA(\sigma_i)$, the algorithm $\algo{P}$ first needs to figure out if the ideal primitive call is a correct call for a vertex $v$. Towards this the predictor keeps track of the labels ${\sf prelab}(u)$, ${\sf postlab}(u)$, and ${\sf lab}(u)$ for all $u\in V$. Once a node's predecessors' labels are fixed, the predictor accordingly updates the {\sf prelab} of that node. Given an ideal-primitive call from $\adv$, $\algo{P}$ performs the following.
\begin{itemize}
\item If $\pi(x)$ is the first critical call for some node $v$ of the critical set (this can be confirmed from $Q$), then $\algo{P}$ finds the node $w\in W$ such that $x={\sf prelab}(w)$. If ${\sf lab}(w)$ is in the list $\alA$ returns ${\sf lab}(w)\xor x$ as the response. 
\item If the call is an inverse call $\pi^{-1}(y)$, then the predictor checks if there is a node $v\in V$ such that ${\sf postlab}(v)=y$ and ${\sf lab}(v)$ is in the list. If yes, $\alA$ returns ${\sf lab}(v)\xor y$. If no such $v$ exists, $\algo{P}$ queries the oracle and checks if the answer is consistent with some ${\sf prelab}(v)$ for $v\in V$. $\algo{P}$ updates ${\sf prelab}(v),{\sf postlab}(v),{\sf lab}(v)$.
\end{itemize}

\textsc{Handling Critical calls}.
Next, we focus on predicting the output of the oracle for the critical queries. The predictor simulates $\adv(\sigma_i)$. For each round $\gamma>i$, after receiving the calls from $\adv$, the predictor $\algo{P}$ responds to the critical calls in the following way

\begin{itemize}
\item $\algo{P}$ first enumerates node $v_j\in \mbox{\bf Critical}(t_i, t_{i+1})$ according to reverse topological order and checks the following: If the $id_j$-th call is in round $\gamma$ and ${\sf lab}({v_j})$ is unknown yet, the predictor uses the hint to predict the output. Note that the call is correct for node $w_j\in W$. The first step for $\algo{P}$ is to find the ${\sf prelab}(w_j)$. Suppose $w_j\in \mbox{\bf Critical}(t_i, t_{i+1})$. If the call is a forward call $\pi(x)$ then $\algo{P}$ sets $x={\sf prelab}(w_j)$. If the call is an inverse call $\pi^{-1}(y)$, then by the fact that $v_j$ is chosen in the reverse topological order, and the critical call for $w_j$ is made before any correct call for $w_j$, the label ${\sf lab}({w_j})$ is already fixed. Now, $\algo{P}$ extracts ${\sf prelab}(w_j)={\sf lab}({w_j})\xor y$ without calling the oracle. If  $ w_j\notin \mbox{\bf Critical}(t_i, t_{i+1})$, $\algo{P}$ can query $\pi^{-1}(y)$ to get ${\sf prelab}(w_j)$.
\item Once $w_j$ and ${\sf prelab}(w_j)$ is known, $\algo{P}$ checks whether $v_j$ is the only predecessor of $w_j$. If yes, $\algo{P}$ predicts ${\sf lab}({v_j})={\sf prelab}(w_j)$ and $\pi({\sf prelab}(v_j))={\sf prelab}(v_j)\xor {\sf lab}({v_j})$. If no, then $\algo{P}$ first finds the other predecessor $u$. If $u\notin  \mbox{\bf Critical}(t_i, t_{i+1})$, as the red-blue pebbling is legal and $w_j$ gets pebbled at round $\gamma>i$, there exists a $\gamma'$ with $i<\gamma'<\gamma$ such that $u$ is pebbled in round $\gamma'$. $\algo{P}$ recognizes the correct call for $u$ and has already updated ${\sf lab}(u)$. If $u\in  \mbox{\bf Critical}(t_i, t_{i+1})$, and the first critical call for $u$ is before round $\gamma$, then ${\sf lab}(u)$ is already known to $\algo{P}$.If $u\in  \mbox{\bf Critical}(t_i, t_{i+1})$ and the first critical call for $u$ is same as $v_j$, since ${\sf lab}({v_j})$ is still unknown, the predictor extract ${\sf lab}(u)$ from the hint $H$. 
\end{itemize}
What is left to do is to describe the working of the predictor for the (later) correct calls for the elements in $\mbox{\bf Critical}(t_i, t_{i+1})$. For each node  $v_j\in \mbox{\bf Critical}(t_i, t_{i+1})$ and each correct ideal primitive call for $v_j$, since the predictor has already computed ${\sf lab}({v_j})$, $\algo{P}$ answers the call without querying the permutation. $\algo{P}$ returns ${\sf lab}({v_j})\xor x$. Note, as ${\sf lab}({v_j})={\sf postlab}(v_j)\xor {\sf prelab}(v_j)$ for a forward call $\pi(x)$ with $x={\sf prelab}(v_j)$, the correct response is indeed ${\sf lab}({v_j})\xor x$. For an inverse call $\pi^{-1}(x)$ with $x={\sf postlab}(v_j)$, the correct response is also ${\sf lab}({v_j})\xor x$. In addition, $\algo{P}$ enters $(v, {\sf prelab}(v), {\sf postlab} (v),{\sf lab}(v))$ in its list.

For each round $\gamma>i$, after checking correct and critical calls for all nodes in $\mbox{\bf Critical}(t_i, t_{i+1})$, the predictor answers the remaining calls by making queries to the random permutation.

\subsubsection{Correctness of the Predictor}
\label{sec:corr-pred}

% The following lemma from \cite{C:CheTes19} is useful here. Recall that, a vector $x=(x_1,x_2,\cdots,x_l)$ is non-colliding if $i\neq j$ implies $x_i\neq x_j$.

% \begin{lemma}
%   \label{lemma:nocoll}
%   Fix $n$, predecessor-distinct DAG $G=(V,E)$ with $n_s$ many source nodes, non-colliding input vector $x\in\bool^{n_s n}$. Let $\tau$ be the non-source node labelling function. With probability $1-\frac{|V|^2}{2^n}$ it holds that $\forall v_1,v_2\in V$, $v_1\neq v_2$ implies that ${\sf prelab}(v_1)\neq {\sf prelab}(v_2)$. In other words, with high probability, all the {\sf prelab}els are distinct.
% \end{lemma}
Recall from Claim~\ref{claim:nocoll}, for a non-colliding input vector $x\in\bool^{n_s n}$, with probability $1-\frac{|V|^2}{2^n}$, all the {\sf prelab}els are distinct. Thus given a list of {\sf prelab}els of all the nodes, we can match the nodes with their {\sf prelab}els. 

If $Bad$ event occurs and $\algo{P}$ is given the hints, it will correctly predict $\pi({\sf prelab}(v))$ for all $v\in \mbox{\bf Critical}(t_i, t_{i+1})$. From Claim~\ref{claim:nocoll}, the set $\{{\sf prelab}(v)\}_{v\in \mbox{\bf Critical}(t_i, t_{i+1})}$ are all distinct and thus contains  at least $(10\delta-1)m$ many points.  \\

First, we argue the correctness of labels maintained by $\algo{P}$ by induction on the order of queries. $\algo{P}$ starts with the correct labels of some nodes in the hint, thus proving the base case. If $v\notin  \mbox{\bf Critical}(t_i, t_{i+1})$, then a correct call for $v$ is preceded by a correct call for $u$ where $u\in{\sf pred}(v)$. Thus by induction hypothesis ${\sf lab}(u)$ is already correctly computed and $\algo{P}$ correctly computes ${\sf prelab}(v)$. Hence $\algo{P}$ can compute ${\sf lab}(v)$ by querying $\pi$ at the first correct call for the node $v$. On the other hand, if $v\in \mbox{\bf Critical}(t_i, t_{i+1})$, then by the input to the query to the successor node and the hint $H$, $\algo{P}$ correctly computes ${\sf lab}(v)$. 
Finally, we show that $\algo{P}$ answers the future correct calls for nodes in the critical set without querying the oracle. For each node $ v\in \mbox{\bf Critical}(t_i, t_{i+1})$, $\algo{P}$ knows the index of the first correct call for $v$ from the hint $L$. It answers the query correctly from the already computed (extracted) and finalized ${\sf lab}(v)$. From the call itself $\algo{P}$ extracts ${\sf prelab}(v)$ and ${\sf postlab}(v)$. For all further correct calls, $\algo{P}$ can respond from its list.    

Finally, we shall use the following lemma, which is an adaptation of Lemma~\ref{lemma:old} in the random permutation setting.

\begin{lemma}
  \label{lemma:newold}
  Let $\pi\sample \Pi$ be a random permutation over $\booln$. Let $\alA$ be an algorithm that takes a hint $h\in \Omega$ as input, makes $q\leq 2^{n-2}$ many queries to $\pi$ and outputs guesses for $k\leq 2^{n-2}$ many un-queried points. The probability that for some $h\in \Omega$, the algorithm $\alA(h)$ successfully guesses all the permutation values correctly is at most $\frac{|\Omega|}{2^{kn-1}}$.
\end{lemma}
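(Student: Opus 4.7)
The plan is to adapt the standard incompressibility argument used in Lemma~\ref{lemma:old} to the permutation setting via a counting-encoding reduction. The key idea is that if, for a random $\pi$, some hint $h \in \Omega$ lets $\alA^\pi(h)$ correctly predict $k$ unqueried values of $\pi$, then we can losslessly encode $\pi$ using only $h$, the $q$ responses that $\alA(h)$ observes during its oracle calls, and $\pi$ restricted to the inputs that are neither queried nor predicted; the $k$ predicted pairs $(y_j, z_j)$ need not be encoded because they are recovered by re-running $\alA(h)$ on the recorded responses. Comparing the number of encodings to the total number $(2^n)!$ of permutations then yields the bound.

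Concretely, for each $\pi$ in the event let $h^{\star}(\pi) \in \Omega$ be the lexicographically smallest successful hint and set
\begin{align*}
  \mathrm{enc}(\pi) \;=\; \bigl(h^{\star}(\pi),\; \mathbf{r}(\pi),\; \rho(\pi)\bigr),
\end{align*}
where $\mathbf{r}(\pi) = (r_1, \ldots, r_q)$ is the sequence of oracle responses observed while simulating $\alA^\pi(h^{\star}(\pi))$ and $\rho(\pi)$ is the restriction of $\pi$ to the $2^n - q - k$ inputs that are neither queried nor predicted. The decoder replays $\alA(h^{\star})$ by answering the $i$-th oracle call with $r_i$; this reproduces the same query-response pairs and the same $k$ predicted pairs that $\alA^\pi(h^{\star})$ obtained, and combined with $\rho$ it recovers $\pi$ uniquely, so $\mathrm{enc}$ is injective. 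At the $i$-th oracle call the response can take at most $2^n - i + 1$ values consistent with the previously revealed pairs (a value not yet assigned, for a forward call, or an input not yet queried, for an inverse call), so
\begin{align*}
  \bigl|\{\pi : \text{event}\}\bigr| \;\leq\; |\Omega| \cdot \prod_{i=1}^{q}(2^n - i + 1)\,\cdot\,(2^n - q - k)!\,.
\end{align*}
Dividing by $(2^n)!$ gives $\Pr[\text{event}] \leq |\Omega|/\prod_{i=0}^{k-1}(2^n - q - i)$, and the hypotheses $q, k \leq 2^{n-2}$ keep every factor in the product at least $2^{n-1}$, from which the claimed bound follows.

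The main subtlety I foresee, absent in the bit-valued Lemma~\ref{lemma:old}, is the correct handling of inverse queries in the decoder's replay. The decoder must never call $\pi$ or $\pi^{-1}$ on a point whose value would come from one of the unencoded predictions $(y_j, z_j)$, since those values are not stored in $\mathrm{enc}(\pi)$. This is guaranteed by the very definition of an \emph{unqueried} point: during its $q$ original oracle calls, $\alA^\pi(h^{\star})$ learns the pair $(y_j, z_j)$ neither from a forward call $\pi(y_j)$ nor from an inverse call $\pi^{-1}(z_j)$, so the decoder's replay reproduces the exact same query sequence and no conflicting call ever arises before the final prediction phase.
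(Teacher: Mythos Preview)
Your encoding argument is sound and arrives at the same intermediate estimate as the paper, namely
\[
\Pr\bigl[\exists\, h\in\Omega:\ \alA^{\pi}(h)\text{ succeeds}\bigr]\ \le\ \frac{|\Omega|}{\displaystyle\prod_{i=0}^{k-1}(2^{n}-q-i)}\,,
\]
but the route is genuinely different. The paper's proof is much shorter: it fixes a single $h$, observes that conditioned on the transcript of the $q$ oracle calls the $k$ unqueried outputs are jointly uniform over a set of size $(2^n-q)(2^n-q-1)\cdots(2^n-q-k+1)$, and then takes a union bound over $h\in\Omega$. Your injective encoding $(h^{\star},\mathbf r,\rho)$ reaches the identical denominator by counting; it is more work, but it makes the compression viewpoint of Lemma~\ref{lemma:old} explicit, and your handling of inverse queries in the replay is careful and correct. (You should add the standard remark that $\alA$ may be taken deterministic and non-repeating without loss of generality, so that exactly $q$ distinct input--output pairs are revealed and the size of the domain of $\rho$ is really $2^n-q-k$.)

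There is, however, a real gap in your final sentence. From ``every factor is at least $2^{n-1}$'' you obtain only
\[
\prod_{i=0}^{k-1}(2^n-q-i)\ \ge\ (2^{n-1})^k\ =\ 2^{kn-k},
\]
which is \emph{not} $2^{kn-1}$ unless $k=1$. Concretely, with $n=4$ and $q=k=2^{n-2}=4$ the product equals $12\cdot11\cdot10\cdot9=11880<32768=2^{kn-1}$, so the exponent $kn-1$ is simply unattainable from the hypotheses $q,k\le 2^{n-2}$ alone. The paper's own chain of inequalities at this last step has the same defect. What your argument (and the paper's) actually establishes is the bound $|\Omega|/2^{k(n-1)}$; this weaker exponent is still ample for the application in Lemma~\ref{lemma:predictor}, but you should either state that bound, or add a stronger hypothesis such as $k(q+k)\le 2^{n-1}$ (which, via the Weierstrass product inequality $\prod(1-x_i)\ge 1-\sum x_i$, does yield the exponent $kn-1$).
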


Now in our case, total size of the hint is $2^{(10\delta-3)mn}$ and if $Bad$ holds and there is no collision found by $\alA$, then $k\geq (10\delta-1)m$. Hence $Pr[Bad]\leq \frac{2^{(10\delta-3)mn}}{2^{ (10\delta-1)mn-1}}+\frac{|V|^2}{2^n} $. This proves Lemma~\ref{lemma:predictor}. \qed

\subsubsection{Finishing proof of Theorem \ref{thm:smallblock}}
\label{sec:finishing-theorem-}
Fix $\epsilon$ such that $\epsilon-2^{-2mn+1}- 2^{-n(1-1/4\delta)} \geq \epsilon/2$. So far, we have proved the following. For any algorithm $\alA$ that computes the graph function $f_{G,\pi,\tau}$ with probability more than $\epsilon$, it holds with probability more than $\epsilon/2$ that
\begin{itemize}
\item The pre-labels are distinct.
\item The extracted red blue pebbling is legal and has cost at most  $\sum_{i=1}^{k} (20\delta mc_b+\sum_{j\in (t_i,t_{i+1}]} c_r (P_{j}\setminus P_{j-1}))$ where $k$ is the number of interval partitions.
 \item For all intervals $(t_i,t_{i+1}]$ except the last, the algorithm $\alA$ transfers more than $mn$ bits between the cache and the memory and thus the total cost is at least $\sum_{i=1}^k (mc_b+ c_r\sum_{j\in (t_i,t_{i+1}]} |P_j\setminus P_{j-1}|)-mc_b$.  
\end{itemize}

% Now, by Lemma \ref{lemma:predictor}, any adversary that transfers less than $mn$ bits of data between cache and memory, correctly computes the graph function $f_{G,\pi}(x)$ with probability at most $2^{-n/2}$. Thus with probability at least $\epsilon-2^{-n/2}>\epsilon/2$, it holds that energy cost of an execution trace is at least $mc_b+ c_r\sum_{j\in (t_i,t_{i+1}]} |P_j\setminus P_{j-1}|$.  As the total cost of our extension pebbling is at most $ 20mc_b+\sum_{j\in (t_i,t_{i+1}]} c_r (P_{j}\setminus P_{j-1})$. 
Thus we get that

\begin{align*}
  ecost(f_{G,\pi,\tau},mn) &\geq \epsilon/2 \left( \sum_{i=1}^k (mc_b+ c_r\sum_{j\in (t_i,t_{i+1}]} |P_j\setminus P_{j-1}|)-mc_b\right)\\& \geq \frac{\epsilon}{40\delta}rbcost(G,20\delta m)-\frac{\epsilon mc_b}{2}. 
\end{align*}

This finishes the proof of Theorem \ref{thm:smallblock}. \qed

\section{Wide Block Labelling Functions}
In \cite{C:CheTes19}, authors introduced the notion of \emph{wide block labelling functions}. They instantiated such functions using small block labelling functions and showed wide block labelling functions are useful for construcing succint iMHFs. Startng from a small block labelling  function $\flab^{\ipa}:\bool^L \cup \bool^{2L} \to \bool^L$, \cite{C:CheTes19} constructed a family of wide block labelling functions $\vlab^{\ipa}_{\gamma, W}: \bool^{\gamma W} \to \bool^W$ for $\gamma < \delta$. Such family of labelling function is denoted by $H_{\delta, W}$. They showed if $\flab^{\ipa}$ is pebbling reducible and $G$ is a  source to sink depth robust directed acyclic graph with indegree $\delta$, then the graph function $\mathcal{F}_{G, H_{\delta, W}}$ built upon $G$ and $H_{\delta, W}$ has high $\CMC$ cost. 

We use the same wide block labelling function construction and show if $G$ is a  source to sink depth robust directed acyclic graph with indegree $\delta$, then the graph function $\mathcal{F}_{G, H_{\delta, W}}$ has high \emph{ecost}.

\begin{theorem}
    Suppose, $\flab^{\pi}:\bool^n \cup \bool^{2n} \to \bool^n$ is the small block labelling function defined in Section \ref{sec:graph-based-bandw}\footnote{The labelling function defined in Section \ref{sec:graph-based-bandw} is not dependent on the node itself. Hence it can be viewed as a function whose input is the predcessor nodes.}, $H_{\delta, W}$ is a family of wide block labelling functions $\vlab^{\pi}_{\gamma, W}: \bool^{\gamma W} \to \bool^W$ for $\gamma < \delta$ based on $\flab^{\pi}$ as defined in Section 4.1 of \cite{C:CheTes19}, $G=(V,E)$ be a first-predecessor-distinct $(e,d)$-depth-robust DAG with $\delta$ (maximum) indegree and single source/sink. If $\mathcal{F}_{G, H_{\delta, W}}$ is the graph function built upon $G$ and $H_{\delta, W}$ then $$ecost_\epsilon(\mathcal{F}_{G, H_{\delta, W}}, mn) \geq \frac{\epsilon}{40} \cdot K\sqrt{c_bc_r\delta K e (d-1)} - \frac{5\epsilon}{2} mc_b,$$
    where $K = W/n$, $c_b$ is the per unit memory transfer cost, $c_r$ is the cost of one ideal primitive query and $m$ is the cache size.
\end{theorem}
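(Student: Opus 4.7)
The plan is to reduce the wide-block case to the small-block case already handled by Theorem~\ref{thm:smallblock}, and then invoke a red-blue pebbling lower bound for source-to-sink depth-robust graphs in the spirit of \cite{CCS:BloRenZho18}.

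First I would unfold the Chen--Tessaro wide-block construction: each evaluation of $\vlab^{\pi}_{\gamma,W}$ is itself computed by a fixed internal gadget DAG $\Gamma$ on $K = W/n$ vertices whose nodes are labelled by $\flab^{\pi}$. Substituting a copy of $\Gamma$ for every non-source vertex of $G$ and wiring the predecessor outputs into the gadget inputs yields a composite DAG $G' = (V', E')$ with $|V'| = \Theta(K|V|)$ vertices and bounded maximum indegree $\delta' = O(\delta)$; by construction $G'$ is predecessor-distinct, and the small-block graph function $f_{G', \pi, \tau}$ coincides with $\mathcal{F}_{G, H_{\delta,W}}$.

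Applying Theorem~\ref{thm:smallblock} to $G'$ gives
\[
ecost_\epsilon(\mathcal{F}_{G,H_{\delta,W}}, mn) \;\geq\; \frac{\epsilon}{40\delta'}\, rbcost(G', 20\delta' m) \;-\; \frac{\epsilon m c_b}{2}.
\]
What remains is a lower bound on $rbcost(G', 20\delta' m)$. For this step I would adapt the red-blue pebbling argument of Blocki--Ren--Zhou for depth-robust graphs: since $G$ is source-to-sink $(e,d)$-depth-robust with indegree $\delta$, any legal red-blue pebbling of $G'$ must realise an entire source-to-sink path of $G$ of length at least $d$ in any graph obtained after deleting at most $e$ vertices, and each such original-node pebble requires a full gadget traversal of $K$ small-block calls in $G'$. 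Balancing red-move cost (paying $c_r$ for each additional red pebble) against blue-move cost (paying $c_b$ to swap a blue pebble in and out), while the gadget structure forces $\Omega(K)$ intermediate labels per original-node pebble, produces a bound of the form $\Omega\!\left(K\sqrt{c_b c_r\,\delta K\, e\,(d-1)}\right)$.

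The main obstacle is this last step: establishing the red-blue pebbling lower bound on $G'$ with the correct $K$-dependence. Concretely, one must argue that source-to-sink depth-robustness of $G$ (rather than plain depth-robustness) forbids the adversary from short-cutting gadgets via a nearby sink, and that within a single gadget traversal no red-blue strategy with only $O(m)$ words of cache can avoid an energy cost proportional to $\sqrt{c_b c_r\,\delta K}$ per gadget layer. Once this lower bound on $rbcost(G', 20\delta' m)$ is in hand, substituting back into the inequality above and absorbing $\frac{\epsilon m c_b}{2}$ together with the boundary loss from the final (uncounted) interval of the pebbling partition into the stated $\frac{5\epsilon}{2}\, m c_b$ slack yields the theorem.
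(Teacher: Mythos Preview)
Your overall architecture matches the paper: unfold the wide-block construction into a small-block graph (the paper calls it $\texttt{Ext}_{\delta,K}(G)$), apply Theorem~\ref{thm:smallblock} to that graph, and then lower-bound its red-blue pebbling cost. Two points of divergence are worth noting.

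First, a small correction: the Chen--Tessaro extension graph $\texttt{Ext}_{\delta,K}(G)$ has maximum indegree~$2$, not $O(\delta)$; this is why the constants in the paper come out with $40$ and $20m$ rather than carrying a $\delta'$.

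Second, and more substantively, the paper does \emph{not} attack $rbcost(\texttt{Ext}_{\delta,K}(G),20m)$ by a direct red-blue argument on depth-robust gadget graphs, which you correctly flag as the hard part of your plan. Instead it composes two off-the-shelf results: Theorem~1.2 of \cite{CCS:BloRenZho18}, which gives the generic inequality
\[
rbcost(H,20m)\ \geq\ 2\sqrt{2\,c_bc_r\,\texttt{cc}(H)}\ -\ 80\,mc_b
\]
for any DAG $H$, and the cumulative \emph{black} pebbling bound $\texttt{cc}(\texttt{Ext}_{\delta,K}(G))\geq \tfrac{\delta K^3}{8}\,e\,(d-1)$ already proved in \cite{C:CheTes19} for source-to-sink depth-robust $G$. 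Chaining these two lines gives exactly the $K\sqrt{c_bc_r\,\delta K\,e(d-1)}$ term you are aiming for, with the $-80mc_b$ term absorbed into the $\tfrac{5\epsilon}{2}mc_b$ slack. So the ``main obstacle'' you identify dissolves: you do not need a new red-blue lower bound for the gadget graph, only the square-root bridge from $rbcost$ to $\texttt{cc}$ plus the existing black-pebbling bound. Your route would also work in principle, but it re-proves from scratch what is already available as a black-box composition.
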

\begin{proof}
    By opening the graph $\mathcal{F}_{G, H_{\delta, W}}$, one can get the graph $\texttt{Ext}_{\delta, K}(G)$ (as defined in proof of Theorem 4 in section 4.2 of \cite{C:CheTes19}), which is of maximum in degree $2$. By Theorem \ref{thm:smallblock}, we have $$ ecost_\epsilon(\mathcal{F}_{G, H_{\delta, W}}, mn) \geq \frac{\epsilon}{40}rbcost(\texttt{Ext}_{\delta, K}(G),20m)-\frac{\epsilon mc_b}{2}.$$ By Theorem 1.2 of \cite{CCS:BloRenZho18}, we have $$rbcost(\texttt{Ext}_{\delta, K}(G),20m) \geq 2\sqrt{2 c_b c_r \texttt{cc}({\texttt{Ext}_{\delta, K}(G)})} -80mc_b,$$ where $\texttt{cc}(G)$ is cumulative black pebbling complexity of a graph $G$. In proof of Theorem 4 in Section 4.2 of \cite{C:CheTes19} authors also showed $$\texttt{cc}({\texttt{Ext}_{\delta, K}(G)}) \geq \frac{\delta K^3}{8} \cdot e \cdot (d-1).$$ Combining the above three inequalities, we have $$ ecost_\epsilon(\mathcal{F}_{G, H_{\delta, W}}, mn) \geq \frac{\epsilon}{40} \cdot K\sqrt{c_bc_r\delta K e (d-1)} - \frac{5\epsilon}{2} mc_b.$$
\end{proof}
\section{Concluding Discussion.}
 Our main contribution was to show how one can instantiate bandwidth hard functions in the random permutation model, given a constant in-degree graph $G$ with high red-blue pebbling complexity. The result is a natural follow-up of the line of work started with ~\cite{STOC:AlwSer15} followed by \cite{EPRINT:BirDinKho15,CCS:AlwBloHar17,C:CheTes19,CCS:BloRenZho18} and many others.  

In \cite{STOC:AlwSer15}, Alwen and Serbinenko introduced the notion of amortized cost and defined the notion of Memory Hardness in terms of amortized evaluation cost of the function. Intuitively, amortized cost is the right notion for evaulating memory hardness because in reality adversary is usually interested in evaulating the function on a set of inputs rather than a single input. Alwen and Serbinenko also showed, if one time evaluation complexity or Cumulative Memory Complexity (CMC) of the function can be reduced to black pebbling complexity of the underlying graph, then in random oracle model amortized complexity of the function can also be reduced to the same. 

For a family of functions $\mathcal{F}=\{f^\pi\colon\calx\to\caly\}_{\pi\in\Pi}$ we define  $\mathcal{F}^{\otimes n}=\{(f^\pi)^{\otimes n} \colon $ $\calx^n \to \caly^n\}_{\pi\in\Pi}$ where $(f^\pi)^{\otimes n}$ simply extends the domain and range of $f^\pi$ by evaulating the function in $n$ times in parallel. Following \cite{STOC:AlwSer15}, for any $k\in \NN$ one can define amortized cumulative memory complexity (aCMC) as follows:
$$aCMC_{k,\epsilon} \eqdef \min_{\tilde{k} \in [k]}\frac{CMC_{\epsilon}(\calf^{\otimes \tilde{k}})}{\tilde{k}}.$$

Similarly, we can also define amortized energy cost of a functions as follows:
$$aecost_{k, \epsilon}(\calf,mn) \eqdef \min_{\tilde{k} \in [k]}\frac{ecost_{\epsilon}(\calf^{\otimes \tilde{k}},mn)}{\tilde{k}}.$$

While \cite{C:CheTes19,CCS:BloRenZho18} did not  explicitly addressed how to extend their respective results to amortized complexity. However, we note the extensions are not so diffcult. In case of \cite{CCS:BloRenZho18} the extension follows because for node disjoint DAGs the red blue pebbling complexity is additive (same as black pebbling complexity). We observe that for the results in \cite{C:CheTes19}, one can actually obtain a tighter reduction compared to Alwen and Serbinenko's reduction in \cite{STOC:AlwSer15}. This is due to the fact that their ideal primitive does not take any explicit information about the node itself. Similarly, we can also extend our result to amortized energy complexity in random permutation model.

% In Sectio
% n \ref{sec:bandw-hard-funct}, In Section 5 of \cite{CCS:BloRenZho18}, authors showed the family of graphs generated by Argon2i \cite{EPRINT:BirDinKho15}, aATSample and DRSample \cite{CCS:AlwBloHar17} have high red blue pebbling cost. Incidentally, all these graphs have maximum in-degree 2 and they can be instantiated as bandwidth hard functions in the random permutation model using our labeling function. Moreover, the original compression functions used in Argon2i spec is exactly same as the one that we used in this work. This leads to provable instantiation of bandwidth hard Argon2i. Even though we only considered the parallel random permutation model in this work, similar to \cite{C:CheTes19} one can also extend our results to parallel ideal cipher and parallel ideal compression function models.
\section{Missing Proofs}
\label{sec:missing-proofs}

\subsection{Proof of Claim~\ref{claim:nocoll}}
\label{sec:proof-claim-nocoll}

  \begin{proof}
    Without loss of generality, we consider the nodes in the topological order. Let $E1_i$ be the event that  ${\sf lab}(u)={\sf lab}(v)$ for some $u\neq v$ with $u,v\leq i$. Let $E2_i$ denote the event ${\sf prelab}(u)={\sf prelab}(v)$ for some $u\neq v$ with $u,v\leq i$. We prove the claim by induction on $i$. First we recall that all the input blocks are distinct as the input is non-colliding.

    For $i=1$, the base case, we start with the observation both $\Pr[E1_1]$ and $\Pr[E2_1]$ is equal to zero as there is only one node with the topological index 1.

 Let $p_1(u),\ldots,p_\delta(u)$ be the parents of $u$ in the topologically sorted order. For any $i\leq m$ the equality ${\sf prelab}(i)={\sf prelab}(m+1)$ holds if and only if
    \begin{align*}
      \bigoplus_{j=1}^\delta {\sf lab} (p_j(i))= \bigoplus_{j=1}^\delta {\sf lab} (p_j(m+1))
    \end{align*}
    Suppose Without loss of generality, the label of $p_\delta(m+1)$ was the last one evaluated. If it was a forward query, then ${\sf postlab} (p_\delta(m+1))$ had $2^n-m$ many choices. Conditioned on $\neg E1_m$, the probability that the above equality gets satisfied is $\frac{1}{2^n-2m}$. Moreover, for an inverse $\pi$ query probability that the output matches with ${\sf prelab} (p_\delta(m+1))$ is also $\frac{1}{2^n-2m}$. Taking union bound over all possible $m$ choices of $i$, we get
    \begin{align*}
      \Pr[E2_{m+1}\mid \neg (E1_m \vee E2_m)]\leq \frac{m}{2^n-2m}
    \end{align*}
    Conditioned on $\neg E2_{m+1}$, the probability that ${\sf lab}(i)= {\sf lab}(m+1)$ for any fixed $i\leq m$ is $\frac{1}{2^n-2m}$. Taking union bound over all $i\leq m$, we get
    \begin{align*}
      \Pr[E1_{m+1}\mid \neg E2_{m+1}\wedge \neg (E1_m \vee E2_m)]\leq \frac{m}{2^n-2m}
    \end{align*}
    Now we bound
    \begin{align*}
       \Pr_\pi\left[{\tt Coll}\right]\leq \sum_{m=1}^{|V|} \frac{2m}{2^n-2m} \leq \frac{2|V|^2}{2^n}
    \end{align*}
    \qed\end{proof}

\subsection{Proof of Lemma~\ref{lemma:size}}
\label{sec:proof-lemma-size}

\begin{proof}
  \textsc{Proof of first statement.}
  By construction of the interval, $|\mbox{\bf Critical}(j, t_{i+1}-1)|\leq (10\delta -1)m$. As cache size is bounded by $m$, $|{\sf pred}(P_{t_{i+1}}\setminus P_{t_{i+1}-1})|\leq m$. Thus  $\lvert\mbox{\bf Critical}(j, t_{i+1})\rvert\leq 10\delta m$. \qed

\textsc{Proof of second statement.}  We start from the observation that ${\sf pred}(P_{j+1}\setminus P_{j})\subseteq \mbox{\bf Critical}(j, t_{i+1})$. For any $v\in \mbox{\bf Critical}(j, t_{i+1})$, either $v\in \mbox{\bf Critical}(t_i, t_{i+1})$ (thus $v\in R_j^{move}$) or $v$ has been pebbled at some step within the interval $(t_i,j)$ (thus $v\in R_j^{old}$). As $R_j=R_j^{move}\cup R_j^{old}$, we conclude ${\sf pred}(P_{j+1}\setminus P_{j})\subseteq R_j$. Hence all the parent nodes are in cache, and hence the pebbling is legal. \qed

\textsc{Proof of third statement.} Follows from $R_j^{old}\subseteq \mbox{\bf Critical}(j+1, t_{i+1}) $ and by Lemma \ref{lemma:size}, $|\mbox{\bf Critical}(j+1, t_{i+1})|\leq 10\delta m$.\qed
\end{proof}
\subsection{Proof of Lemma~\ref{lemma:newold}}
\label{sec:proof-lemma-newold}

\begin{proof}
  Fix an $h$ independent from $\pi$. Given that $\alA$ makes $q$ many queries, the outputs for $q$ many points are fixed. Hence conditioned on the transcript, probability that $\alA$ predicts the permutation value of all the $k$ many points is
  \begin{align*}
    \frac{1}{(2^n-q)(2^n-q-1)\cdots(2^n-q-k+1)}
    &<\frac{1}{2^{kn}-qk2^{n(k-1)}}\\
    &< \frac{1}{2^{kn}-2^{n-1+kn-n}}\\
    &< \frac{1}{2^{kn-1}}
  \end{align*}
  Now taking union bound over all possible choice of $h$, we get the probability is at most $\frac{|\Omega|}{2^{kn-1}}$.\qed 
\end{proof}
\bibliographystyle{plain}
\bibliography{../cryptobib/abbrev3,../cryptobib/crypto,thispaper}

\end{document}